\newtheorem{thm}{Theorem}[section]
\newtheorem{lemma}[thm]{Lemma}
\newtheorem{cor}[thm]{Corollary}
\newtheorem{prop}[thm]{Proposition}
\newtheorem{defn}[thm]{Definition}
\def\ket#1{| #1 \rangle}
\def\bra#1{\langle #1 |}
\def\ootimes{\otimes_{\min}}
\def\psii{\psi_{\mbox{\scriptsize\rm initial}}}
\def\psit{\psi_{\mbox{\scriptsize\rm target}}}
\def\half{\frac{1}{2}}
\def \kk#1{| #1 \rangle}                                                            
\newcommand{\cl}[1]{\mathcal{#1}}
\begin{document}

\title{Constant gap between conventional strategies and those based on C*-dynamics for self-embezzlement}

\author{Richard Cleve}
\affiliation{Institute for Quantum Computing and Cheriton School of Computer Science, University of Waterloo, Canada.}
\email{cleve@uwaterloo.ca}
%\thanks{Additional affiliation: Canadian Institute for Advanced Research, Toronto.}
\author{Beno\^{i}t Collins}
\affiliation{Department of Mathematics, Kyoto University, Kyoto 606-8502, Japan.}
\email{collins@math.kyoto-u.ac.jp}
\author{Li Liu}
\affiliation{Institute for Quantum Computing and Cheriton School of Computer Science, University of Waterloo, Canada.}
\email{147liu@uwaterloo.ca}
\author{Vern Paulsen}
\affiliation{Institute for Quantum Computing and Department of Pure Mathematics, University of Waterloo, Canada.}
\email{vpaulsen@uwaterloo.ca}

\maketitle

\begin{abstract}
We consider a bipartite transformation that we call \emph{self-embezzlement} and use it to prove a constant gap between the capabilities of two models of quantum information: the conventional model, where bipartite systems are represented by  tensor products of Hilbert spaces; 
and a natural model of quantum information processing for abstract states on C*-algebras, 
where joint systems are represented by tensor products of C*-algebras.
We call this the \emph{C*-circuit} model and show that it is a special case of the commuting-operator model (in that it can be translated into such a model).
For the conventional model, we show that there exists a constant $\epsilon_0 > 0$ such that self-embezzlement cannot be achieved with precision parameter less than $\epsilon_0$ (i.e., the fidelity cannot be greater than $1 - \epsilon_0$); whereas, in the C*-circuit model---as well as in a commuting-operator model---the precision can be $0$ (i.e., fidelity~$1$).

Self-embezzlement is not a non-local game, hence our results do not impact the celebrated Connes Embedding conjecture.
Instead, the significance of these results is to exhibit a reasonably natural quantum information processing problem for which there is a constant gap between the capabilities of the conventional Hilbert space model and the commuting-operator or C*-circuit model. 
\end{abstract}

\section{Introduction and summary}

In the conventional model of quantum information, separate quantum systems are represented by Hilbert spaces and joint systems are represented by their tensor products.
Localized dynamics and measurements are operations on the Hilbert spaces of the subsystems. 
 
This model---that we refer to here as the \emph{conventional} model---is not fully general.
In a more general \emph{commuting-operator} model, there is one global Hilbert space and the localized dynamics and measurements act on that space with the requirement that certain operators on separate subsystems commute.
In 2017, Slofstra \cite{Slofstra2017} showed that there are \emph{non-local correlations} that can be attained in the commuting-operator model that cannot be obtained in the conventional model.
It remains an open question whether every commuting-operator correlation can be approximated to arbitrary precision by a conventional correlation.
This question is equivalent to that of the celebrated \emph{Connes embedding conjecture}~\cite{Fritz2012,junge2011,Ozawa2013}.

We consider a broader scenario than non-local correlations, and prove that there is a task that can be performed in the commuting-operator model (as well as a model of quantum information processing for abstract states on C*-algebras) that has the property that it cannot even be approximated to arbitrary precision in the conventional model.

Our task is a variant of \emph{embezzlement}, which was introduced by van Dam and Hayden in~\cite{vanDamH03}.
Embezzlement is a mapping where, by local operations, an entangled state $\psi$ is used catalytically to create some other entangled state $\phi$ with high fidelity.
An $\epsilon$-precision embezzlement scheme for state $\phi$ using catalyst $\psi$ with precision parameter $\epsilon$ is a set of local operations that map 
$\psi \otimes (\ket{0} \otimes \ket{0})$ to $\psi \otimes \phi$ within fidelity $1-\epsilon$.
The case where $\epsilon = 0$ corresponds to exact embezzlement.
In the conventional model, any entangled state $\phi$ can be embezzled with precision arbitrarily close to 0 but not exactly (even if the Hilbert spaces of the individual systems are allowed to be infinite dimensional and $\psi$ has infinite entanglement entropy); however, in a commuting-operator model, states can be embezzled exactly~\cite{CleveLP2017}.

Embezzlement cannot be directly tested experimentally the way non-local correlations can, because the parties can utilize concealed entanglement.
Nevertheless, non-local correlations based on the \emph{idea} of embezzlement have been discovered that can approximated to arbitrary precision $\epsilon$ in the conventional model, but where the amount of entanglement required is $\Omega(1/\epsilon)$~\cite{JiLV2018} (see also the earlier related results~\cite{RV2013,LeungTW13}).

Our new result concerns a task that we call \emph{self-embezzlement} and which is remarkable because it can be achieved exactly in the commuting-operator model, whereas \emph{it cannot even be approximated to arbitrary precision} in the conventional model.
In self-embezzlement, a second copy of the catalyst state is embezzled.
That is, by local operations, state $\psi \otimes (\ket{0} \otimes \ket{0})$ is mapped to $\psi \otimes \psi$ within fidelity $1-\epsilon$.
Here, we don't have a specific target state; rather, we allow the catalyst to be 
any pure state that is non-trivially entangled in the sense that it can be used to approximately attain the maximal violation of the CHSH inequality~\cite{ClauserHSH69} (i.e., by a factor of $\sqrt{2} - \epsilon$).
Our main results, stated informally, are: 

\begin{thm}
There exists an $\epsilon_0 > 0$ such that, in the conventional model, approximate self-embezzlement to precision $\epsilon_0$ is impossible.
\end{thm}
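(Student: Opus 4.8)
The plan is to argue by contradiction: assume that for every $\epsilon>0$ there is a pure state $\psi$ on $\HHH_A\otimes\HHH_B$ equipped with local $\pm1$ observables achieving CHSH value at least $2(\sqrt2-\epsilon)$, together with local channels $\Phi_A,\Phi_B$ such that $\sigma:=(\Phi_A\otimes\Phi_B)\bigl(\ketbra{\psi}{\psi}\otimes\ketbra{0}{0}\otimes\ketbra{0}{0}\bigr)$ has fidelity at least $1-\epsilon$ with $\psi\otimes\psi$. Write $\lambda=(\lambda_1\ge\lambda_2\ge\cdots)$ for the squared Schmidt coefficients of $\psi$, i.e.\ the eigenvalues of $\rho_A:=\Tr_B\ketbra{\psi}{\psi}$. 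The first step is a no-signalling/dilation argument. Since $\psi\otimes\psi$ is pure, $\sigma$ lies within trace distance $2\sqrt\epsilon$ of a pure state; expanding $\Phi_A$ and $\Phi_B$ by Stinespring and noting that near-purity of the global output forces each party's environment to nearly decouple, one gets that $\Phi_A$ is $O(\epsilon^{c})$-close, for a fixed $c>0$, to an isometry on the support of $\rho_A$, hence approximately preserves its spectrum. As $\Phi_B$ cannot alter the reduced state on Alice's output registers, the sorted eigenvalue list of $\rho^{A_1A_2}_\sigma$ is then $O(\epsilon^{c})$-close in $\ell^1$ to $\lambda^{\downarrow}$. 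On the other hand, fidelity $1-\epsilon$ with $\psi\otimes\psi$, contractivity of the partial trace, and Mirsky's inequality show the same list is within $2\sqrt\epsilon$ of the sorted eigenvalue list of $\rho_A\otimes\rho_A$, namely $(\lambda\otimes\lambda)^{\downarrow}$. Setting $\eta:=O(\epsilon^{c})$, this yields $\|\lambda^{\downarrow}-(\lambda\otimes\lambda)^{\downarrow}\|_1\le\eta$.

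The second step, which is the crux, is a purely combinatorial lemma: any probability vector $\lambda$ with $\|\lambda^{\downarrow}-(\lambda\otimes\lambda)^{\downarrow}\|_1\le\eta$, for $\eta$ below a small absolute constant (say $1/8$), must satisfy $\lambda_1\ge 1-4\eta$. To see it, put $s_K:=\sum_{i\le K}\lambda_i^{\downarrow}$. The $K$ largest entries of $\lambda\otimes\lambda$ can be taken with both indices at most $K$ (a product $\lambda_i\lambda_j$ with $i>K$ is dominated by each of $\lambda_1\lambda_1,\dots,\lambda_1\lambda_K$, so it can be swapped out without decreasing the sum), hence their sum is at most $s_K^2$. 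Combined with $\bigl|s_K-\sum_{i\le K}(\lambda\otimes\lambda)^{\downarrow}_i\bigr|\le\eta$, this gives $s_K\le s_K^2+\eta$ for every $K$, so each partial sum $s_K$ avoids the open interval $(a,b)$, where $a\le 2\eta$ and $b\ge 1-2\eta$ are the roots of $x^2-x+\eta$. Since $s_K$ increases to $1$ with increments $s_K-s_{K-1}=\lambda_K^{\downarrow}\le\lambda_1$, at the first index $K^{*}$ with $s_{K^{*}}\ge b$ we have $s_{K^{*}-1}\le a$, whence $\lambda_1\ge s_{K^{*}}-s_{K^{*}-1}\ge b-a\ge 1-4\eta$.

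Finally, $\lambda_1\ge 1-4\eta$ means $\psi$ has fidelity at least $1-4\eta$ with some product state $\ket{a}\otimes\ket{b}$; but the CHSH operator has norm $2\sqrt2$ and value at most $2$ on every product state, so by Lipschitz-continuity of the CHSH functional in trace distance the CHSH value of $\psi$ is at most $2+O(\sqrt\eta)=2+O(\epsilon^{c/2})$, contradicting the assumed lower bound $2(\sqrt2-\epsilon)$ once $\epsilon$ is below an absolute threshold $\epsilon_0>0$. The main obstacle is the combinatorial lemma of the second step, and in particular ruling out the highly entangled regime (tiny $\lambda_1$): this is exactly where ordinary van~Dam--Hayden embezzlement succeeds, and what excludes it here is the rigid requirement that the catalyst be reproduced \emph{exactly}, which is what pins the partial sums $s_K$ into the two-interval dichotomy $s_K\le s_K^2+\eta$. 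A secondary technical issue is making the first step's dilation and continuity estimates uniform in $\psi$ and correctly tracking the error exponents through partial traces in possibly infinite dimension.
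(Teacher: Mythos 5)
Your route is genuinely different from the paper's, and for the unitary case (which is how the paper formalizes the statement in Section~2.3) it is correct. The paper first invokes CHSH rigidity (Reichardt--Unger--Vazirani, with constants via Kaniewski) to force the largest squared Schmidt coefficient to be at most $2/3$, and then proves the combinatorial fact that when $p_1\le 2/3$ every rearrangement of $p$ has variation distance at least $2/9$ from $p\otimes p$ (using Vidal--Jonathan--Nielsen to reduce to rearrangements). You invert the logic: from fidelity $1-\epsilon$ you extract $\|\lambda^{\downarrow}-(\lambda\otimes\lambda)^{\downarrow}\|_1\le\eta$ (for unitaries this is immediate from unitary invariance of Alice's marginal spectrum, contractivity of the partial trace, and Mirsky), your partial-sum dichotomy $s_K\le s_K^2+\eta$ then forces $\lambda_1\ge 1-4\eta$, and the CHSH hypothesis is used only through the elementary facts that product states have CHSH value at most $2$ and that the CHSH functional is Lipschitz in trace distance. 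Your lemma checks out (the top-$K$ block bound and the two-root argument are both sound), and what it buys is a more self-contained proof that needs no self-testing theorem at all; the paper's version buys an explicit constant trace-distance gap ($2/9$) in the regime $p_1\le 2/3$. Either way one gets an absolute $\epsilon_0>0$.

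The genuine gap is in your first step for channels (the paper's Lemma~3.5 setting, which you chose to prove). The inference ``near-purity of the global output forces each party's environment to nearly decouple, hence $\Phi_A$ is $O(\epsilon^c)$-close to an isometry and approximately preserves the spectrum of $\rho_A$'' is not valid: purity of the output only forces the \emph{joint} environment $E_AE_B$ to decouple from the output registers, while $E_A$ may remain entangled with $E_B$, in which case $\Phi_A$ (which traces out $E_A$) need not be close to spectrum-preserving. (Output purity alone never forces near-isometry: constant channels that discard the input and prepare $\ket{0}$ give a perfectly pure output.) So the claim that the spectrum of $\rho^{A_1A_2}_\sigma$ is $\ell^1$-close to $\lambda^{\downarrow}$ is unjustified as written---and it is exactly the nontrivial content, since that spectrum is provably close to $(\lambda\otimes\lambda)^{\downarrow}$. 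The repair is the paper's dilation argument, and it is compatible with your lemma: pass to the Stinespring-dilated pure final state, whose squared Schmidt coefficients across the Alice/Bob cut are exactly $\lambda$ (the local unitaries $U_A$, $U_B$ act on Alice's and Bob's sides including their environments), and which by the fidelity hypothesis is close to $\phi_{E_AE_B}\otimes\psi\otimes\psi$, whose list is $\{\gamma_k\lambda_i\lambda_j\}$ with $\gamma_k\le 1$ the weights of $\phi$ across $E_A|E_B$; since $\gamma_k\le 1$, the top-$K$ sums of this list are still at most $s_K^2$, so your inequality $s_K\le s_K^2+\eta$, and hence $\lambda_1\ge 1-4\eta$ and the CHSH contradiction, go through unchanged.
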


\begin{thm}\label{thm:yesgo}
In the commuting-operator model, exact self-embezzlement is possible.
\end{thm}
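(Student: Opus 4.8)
The plan is to take as catalyst the ``infinite-dimensional maximally entangled state'' --- the trace vector of the hyperfinite $\mathrm{II}_1$ factor $R$ --- which is a perfectly good state in the commuting-operator model but has no counterpart in the conventional one, and to self-embezzle it using the fact that $R$ absorbs a tensor copy of itself. The point of contrast with Theorem~1 is this: a local operation preserves the Schmidt spectrum of a pure bipartite state, and squaring the Schmidt spectrum has no nontrivial fixed point among probability vectors, so no entangled pure state satisfies $\psi \cong \psi \otimes \psi$ conventionally; but in the commuting-operator model the relevant local invariant is only the isomorphism class of the pair of von Neumann algebras together with the state, and this is preserved under $(R,\tau) \mapsto (R\mathbin{\overline{\otimes}}R,\,\tau\otimes\tau)$ because $R$ is hyperfinite (hence absorbs $R$) and $\mathrm{II}_1$ factors have a unique normal trace.

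Concretely, write $R = \overline{\bigotimes}_{n\in\NNN} M_2(\CCC)$ with trace $\tau$, pass to the GNS standard form $\HH = L^2(R)$ with cyclic and separating trace vector $\Omega_\tau$, and take the bipartite commuting-operator system with Alice's algebra $\AA_0 = R$ (acting by left multiplication), Bob's algebra $\BB_0 = R' = R^{\mathrm{op}}$ (right multiplication), and catalyst $\psi := \Omega_\tau$. First I would check that $\psi$ realizes the CHSH value $2\sqrt2$ exactly: on the first tensor factor $M_2 \subseteq R$, $\tau$ restricts to the normalized trace, so $\Omega_\tau$ restricted to $M_2 \mathbin{\overline{\otimes}} M_2^{\mathrm{op}}$ is the canonical maximally entangled two-qubit state, and the Tsirelson observables supported there give $2\sqrt2$; in particular $\psi$ is a legitimate choice of catalyst.

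Next, reindexing $\NNN$ by $\NNN \sqcup \NNN$ gives an isomorphism $R \cong R \mathbin{\overline{\otimes}} R$, which is automatically trace-preserving by uniqueness of the trace and therefore extends to a unitary $W : L^2(R) \to L^2(R) \otimes L^2(R)$ with $W\Omega_\tau = \Omega_\tau \otimes \Omega_\tau$, $WRW^* = R\mathbin{\overline{\otimes}}R$ and $WR^{\mathrm{op}}W^* = R^{\mathrm{op}}\mathbin{\overline{\otimes}}R^{\mathrm{op}}$. Thus $W$ is an isomorphism of bipartite commuting-operator systems, respecting the Alice/Bob split, carrying $\psi$ onto $\psi \otimes \psi$. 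On the other hand $\psi \otimes (\ket{0} \otimes \ket{0})$ and $\psi$ are isomorphic as bipartite systems, since adjoining a pure product ancilla only adjoins one-sided pure states and so does not change the correlations. Composing these two isomorphisms realizes the transformation $\psi \otimes (\ket{0} \otimes \ket{0}) \longmapsto \psi \otimes \psi$ by a split-respecting isomorphism --- that is, exact self-embezzlement. (Equivalently, phrased in the C*-circuit model of the paper, this is the remark that the UHF algebra $\overline{\bigotimes}_{\NNN} M_2$ carrying the tracial coupling state is isomorphic, Alice-to-Alice and Bob-to-Bob, to two copies of itself; one then invokes the stated embedding of the C*-circuit model into the commuting-operator model.)

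I expect the only genuinely delicate point --- and the place where the dichotomy with Theorem~1 is pinned down --- to be reconciling the above with the precise notion of ``local operation'' in the model. The strictest reading, where Alice and Bob apply unitaries in fixed algebras on a fixed Hilbert space, is too rigid: such a unitary preserves the support projection of Alice's state up to Murray--von Neumann equivalence (together with the equivalence class of its complement), whereas $\psi \otimes \psi$ demands a faithful Alice-state while the ancilla $\ket{0}\otimes\ket{0}$ forces the initial Alice-state to be non-faithful. So the model must be the flexible one in which a local operation may replace Alice's algebra by an isomorphic copy --- precisely the setting that legitimizes $W$, and precisely the setting in which a \emph{conventional} local operation still cannot alter the Schmidt spectrum. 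Pinning down this definition, and the attendant Hilbert-space bookkeeping for the ancilla, is the main (and essentially definitional) work; the rest is the short argument above.
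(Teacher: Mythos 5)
There is a genuine gap: you never produce what the theorem (and the paper's Section~5 definition) actually demands, namely \emph{two commuting unitaries} $U_A,U_B$ on a single Hilbert space, each localized to one party's registers (centralizing the other party's algebras and normalizing its own), carrying $\psi\otimes(\phi_A\otimes\phi_B)$ to $\psi\otimes\psi$. Your $W:L^2(R)\to L^2(R)\otimes L^2(R)$ is a single global, split-respecting isomorphism; it is not factored into an Alice-local and a Bob-local operation, and your last paragraph resolves this by \emph{redefining} ``local operation'' to mean ``replace each party's algebra by an isomorphic copy.'' That proves a weaker, different statement, not the theorem. Worse, your own faithfulness/support observation shows the obstruction is not an artifact of an overly rigid model but of your choice of catalyst: with $\psi=\Omega_\tau$ and a pure product ancilla, Alice's initial marginal $\tau\otimes\phi$ is non-faithful while her target marginal $\tau\otimes\tau$ is faithful, and precomposition with \emph{any} automorphism of her algebra (unitarily implemented or not, C*- or von Neumann) preserves faithfulness. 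So the purely tracial catalyst can never be self-embezzled by genuinely local operations, even in the C*-circuit model; your parenthetical claim that your construction is ``equivalently'' the paper's C*-circuit remark is therefore not correct.

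The paper avoids this precisely by choosing a different catalyst: a state on $\mathcal R\otimes\mathcal R$ (CAR algebra) consisting of infinitely many maximally entangled pairs \emph{together with} infinitely many $\ket{00}$ pairs, with the ancillas $\phi$ being all-$\ket{0}$ product states. Then a single qubit-permutation $*$-automorphism $\alpha$ of $\mathcal R\otimes\mathcal R$, applied by Alice to $(\mathsf A_1,\mathsf A_2)$ and (the same one) by Bob to $(\mathsf B_1,\mathsf B_2)$, shuffles half of the entangled pairs into the second registers while backfilling with $\ket 0$'s; both the initial and final Alice marginals are of the form ``trace on infinitely many qubits tensored with a $\ket 0$-product on infinitely many qubits,'' so no faithfulness clash arises. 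Finally, the commuting-operator realization is not obtained by an abstract appeal to an embedding but concretely: one forms the crossed product $\mathcal C\rtimes(\mathbb Z\times\mathbb Z)$ of $\mathcal C=\mathcal A\otimes_{\min}\mathcal B\otimes_{\min}\mathcal A\otimes_{\min}\mathcal B$ by the commuting outer automorphisms $\alpha_A,\alpha_B$, extends the initial state by zero on nontrivial powers of the adjoined unitaries, and applies GNS; the images $U_A=\pi(u_A)$, $U_B=\pi(u_B)$ are then honest commuting unitaries satisfying the localization and normalization conditions, which is exactly what your argument is missing. If you want to salvage your route, you must pad your catalyst in the same way (or otherwise match the ``type'' of Alice's initial and final marginals) and then implement the resulting local automorphisms by such a crossed-product/GNS construction rather than by weakening the notion of locality.
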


We do not know whether there are non-local correlations based on the idea of self-embezzlement that exhibit a gap between the two models and make no claim of any consequences regarding the Connes conjecture.
Instead, the significance of these results is to exhibit a reasonably natural quantum information processing problem for which there is a constant gap between the conventional model and the commuting-operator model.
(See~\cite{NavascuesP2012} for another example of a constant-gap separation between the conventional and commuting-operator model---for the task of steering.)

An additional contribution of this work is the proposal and development of a natural model of information processing for abstract states on C*-algebras, that we call the \emph{C*-circuit} model, where the reversible gates are (suitably localized) $*$-automorphisms 
(see Appendix~\ref{sec:c-star-basics} for a brief review of the definitions of \emph{C*-algebra}, \emph{abstract state}, and \emph{$*$-automorphism}).
In fact, our exact self-embezzlement protocol is expressed in the C*-circuit model and can be converted into a commuting-operator model by applying the GNS Theorem \cite{GelfandN1943,Segal1947} to a suitable crossed product of our C*-algebra.
Thus, Theorem~\ref{thm:yesgo} is a corollary of:

\begin{thm}\label{thm:c-star-circuits}
In the C*-circuit model, exact self-embezzlement is possible.
\end{thm}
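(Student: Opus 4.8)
The plan is to build an explicit C*-circuit that performs self-embezzlement exactly, by exploiting the fact that an infinitely entangled catalyst, when described at the level of an abstract state on a C*-algebra rather than a vector in a tensor product, can absorb a fresh copy of itself via a $*$-automorphism. Concretely, I would take the catalyst state to be the "embezzling-type" state on a suitable C*-algebra — naturally the hyperfinite $\mathrm{II}_1$ factor or, more elementarily, an infinite tensor product $\bigotimes_{n=1}^\infty M_2$ (or a CAR-algebra model) carrying the tensor product of two-qubit states tuned so that the corresponding abstract state on the algebra of Alice (resp. Bob) is a shift-like state. The key structural point is that this algebra $\cl{A}$ admits a $*$-isomorphism $\cl{A} \cong \cl{A} \otimes \cl{A}$ (an "absorption" or self-similarity property), and under this identification the catalyst state pulls back to catalyst $\otimes$ catalyst. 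That isomorphism, implemented locally on Alice's side and on Bob's side, is exactly the reversible gate the C*-circuit model permits.

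The steps, in order, would be: (1) fix the C*-algebras $\cl{A}_A$, $\cl{A}_B$ modeling the two parties, together with the ancilla algebras carrying $\ket{0}$, and define the catalyst abstract state $\psi$ on $\cl{A}_A \otimes \cl{A}_B$; (2) verify that $\psi$ is non-trivially entangled in the required CHSH sense — i.e., that there are local observables witnessing a $\sqrt2$ value — which for the standard embezzling state construction is immediate from its first tensor factor or from the state's support; (3) exhibit local $*$-automorphisms $\alpha_A$ of $\cl{A}_A \otimes (\text{ancilla}_A)$ and $\alpha_B$ of $\cl{A}_B \otimes (\text{ancilla}_B)$ whose tensor product $\alpha_A \otimes \alpha_B$, acting on the abstract state $\psi \otimes (\ket{0}\bra{0} \otimes \ket{0}\bra{0})$, yields exactly $\psi \otimes \psi$; this is where the algebra isomorphism $\cl{A} \cong \cl{A} \otimes \cl{A}$ is invoked, with the ancilla playing the role of one of the two factors after a "vacuum-filling" conjugation; (4) check that $\alpha_A$ and $\alpha_B$ are genuine objects of the C*-circuit model — i.e., suitably localized $*$-automorphisms of the joint algebra acting as identity on the other party's factor — so that the composite is a legal bipartite strategy; (5) conclude exact self-embezzlement, and then note that Theorem~\ref{thm:yesgo} follows by the GNS construction applied to an appropriate crossed product, as already indicated in the introduction.

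I expect the main obstacle to be step (3): producing a single $*$-automorphism that simultaneously (a) is supported locally on one party, (b) sends the vacuum ancilla state into a genuine second copy of the catalyst, and (c) does so exactly rather than approximately. In the conventional Hilbert-space model this is precisely the step that is provably impossible (Theorem~1), so the argument must make essential use of the greater flexibility of abstract states — the fact that an abstract state on an infinite tensor product need not be a vector state, so the "leftover" after embezzling need not live in a finite-dimensional or even separable corner. The delicate points will be checking that the relevant inductive-limit or crossed-product C*-algebra is well defined, that the automorphism is continuous in the C*-norm (not merely algebraically defined on a dense $*$-subalgebra), and that the locality constraint is genuinely respected rather than smuggled in via a global operation. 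Once the automorphism is in hand, verifying that it transports $\psi \otimes \ket{00}\bra{00}$ to $\psi \otimes \psi$ should reduce to a routine computation on a generating set of observables, using normality/continuity to extend to the whole algebra.
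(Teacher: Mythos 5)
You have the right setting (the CAR algebra / infinite tensor product of $M_2$, abstract states, and a rearrangement-type local gate), but the step you yourself flag as the main obstacle --- step (3) --- is exactly the step that is neither resolved nor resolvable with the catalyst you describe. If the catalyst is the ``shift-like'' all-EPR state (an infinite tensor product of maximally entangled two-qubit states), then its restriction to Alice's algebra is the unique tracial state $\tau$ on the CAR algebra $\mathcal{R}$, and the marginal of the initial configuration on Alice's two registers is $\tau\otimes\omega_0$ (trace tensor vacuum), while the marginal of the target $\psi\otimes\psi$ is $\tau\otimes\tau$. Since $\tau\otimes\tau$ is the unique tracial state on $\mathcal{R}\otimes\mathcal{R}$, it is invariant under \emph{every} $*$-automorphism, so no local $*$-automorphism $\alpha_A$ can carry $\tau\otimes\omega_0$ to $\tau\otimes\tau$; the ``vacuum-filling conjugation'' you invoke cannot exist in this form. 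The self-similarity $\mathcal{A}\cong\mathcal{A}\otimes\mathcal{A}$ under which the catalyst is isomorphic to two copies of itself (the Keyl--Schlingemann--Werner observation) is not enough, because the task requires an automorphism of Alice's \emph{two-register} algebra that turns the ancilla's product state into half of a second catalyst.

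The paper's proof supplies the missing idea by changing the catalyst itself: $\psi$ is taken to be infinitely many $\ket{00}$ factors tensored with infinitely many EPR factors (indexed by $\mathbb{Z}$, say nonnegative indices vacuum and negative indices entangled), and the ancilla registers are all $\ket{00}$. Then both the initial global state $\psi\otimes(\phi\otimes\phi)$ and the target $\psi\otimes\psi$ consist of countably many EPR pairs plus countably many $\ket{00}$ pairs, merely distributed differently over the qubit positions, so the local gate can be taken to be a permutation of the qubit index set (the same bijection $p:\mathbb{Z}\times\{1,2\}\to\mathbb{Z}\times\{1,2\}$ applied on Alice's and on Bob's side, sending the negative indices of register $1$ onto the negative indices of registers $1$ and $2$). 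A permutation of indices is automatically an isometric $*$-automorphism of $\mathcal{R}\otimes\mathcal{R}$ (so your worries about extending from the dense $*$-subalgebra and about locality are immediate), the padded catalyst still maximally violates CHSH because it contains an EPR factor, and its purity follows from the paper's criterion via restrictions to the finite matrix subalgebras. Without this padding trick your outline does not go through; with it, the remaining verification is the routine generator computation you anticipate.
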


Theorem~\ref{thm:c-star-circuits} is similar to something pointed out by Keyl \emph{et al.} in~\cite{KeylSW2003}, where they refer to a ``maximally entangled state of infinite entanglement" that is said to be (in comment ``ME~8" of section~5.A) ``unitarily isomorphic to two copies of itself".
It is important for this to be with respect to some notion of local operations, and our definition and construction in terms of C*-circuits is a way of putting this intuition into a rigorous framework.

It is noteworthy that the construction in Theorem~\ref{thm:c-star-circuits} uses the so-called CAR algebra (an acronym of ``canonical anti-commutation relations"), a particular C*-algebra that is \emph{hyperfinite}, which is a property that permits it to be built up in a natural way from finite-dimensional C*-algebras.
It is known that no constant gap can be obtained for non-local correlations with a hyperfinite C*-algebra~\cite{scholz2008}.
That is, for any hyperfinite C*-algebras, $\mathcal{A}$ and $\mathcal{B}$, the non-local correlations attainable with the corresponding C*-circuit model are limit points of non-local correlations attainable in the conventional model.
The proof of this uses the fact that, for hyperfinite C*-algebras, $\mathcal{A} \otimes_{\max} \mathcal{B} = \mathcal{A} \otimes_{\min} \mathcal{B}$.
A consequence of this is that no constant gap between the conventional model and the C*-circuit model for non-local correlations can be obtained by a simple black-box reduction to our self-embezzlement transformation.

Returning to the separation between models in~\cite{Slofstra2017}, this can also be expressed as a difference between the C*-circuit model and the conventional (circuit) model, albeit not by a constant-gap.
In that case the C*-algebras involved in the construction do not appear to be hyperfinite (rather, they are a star-crossed product of the CAR algebra with a group action).

In summary, we obtain a constant-gap separation in capability between the C*-circuit model (with the CAR algebra) and the conventional model for a natural problem.
We believe that the C*-circuit model is a natural model for capturing the capabilities of quantum information processing for infinite-dimensional systems represented as abstract states on C*-algebras, and that its capabilities merit further investigation.

\section{Definitions}

%------------------------------------------------------------------------------%
\subsection{The conventional model and the C*-circuit model}

The \emph{quantum circuit} model is the underlying paradigm in which almost all quantum computations and protocols are expressed.

In the \emph{conventional} circuit model, \emph{registers} are represented by Hilbert spaces and \emph{compound registers}%
\footnote{We are using the terminology for registers in~\cite{Watrous2018} (including compound registers).}
 are represented by tensor products of Hilbert spaces.
The \emph{states} of a register are the density operators on its Hilbert space.
The states of (possibly compound) registers can be transformed by a series of reversible \emph{gates}, which are unitary operations on the associated Hilbert spaces; and also they can be \emph{measured} by POVMs (positive-operator valued measures) on the Hilbert spaces.

The \emph{C*-circuit} model is a natural analogue of the conventional model for abstract states on C*-algebras%
\footnote{See Appendix~\ref{sec:c-star-basics} for definitions and some basic properties of C*-algebras and abstract states on them.}.
In the C*-circuit model, \emph{registers} are represented by C*-algebras 
and \emph{compound registers} are represented by tensor products of C*-algebras.
The \emph{states} of a register are the unital positive linear functionals on its C*-algebra.
The states of (possibly compound) registers can be transformed by a series of reversible \emph{gates}, which are $*$-automorphisms of the associated C*-algebras; and also they can be \emph{measured} by POVMs with elements from the C*-algebras.

The following table compares the conventional quantum circuit model and the C*-circuit model.

\begin{figure}[h!]
\small
\renewcommand*{\arraystretch}{1.5}
\begin{tabular}{|l|l|}
\hline
Conventional quantum circuit model & C*-circuit model ($\otimes_{\min}$ version)\\
\hline
\hline
Register $\mathsf{R}$: associated Hilbert space $\mathcal{H}_{\mathsf{R}}$ & 
Register $\mathsf{R}$: associated C*-algebra $\mathcal{A}_{\mathsf{R}}$  
\\
\hline
Compound register $(\mathsf{R}_1,\mathsf{R}_2)$: 
$\mathcal{H}_{\mathsf{R}_1}\otimes \mathcal{H}_{\mathsf{R}_2}$ & 
Compound register $(\mathsf{R}_1,\mathsf{R}_2)$: 
$\mathcal{A}_{\mathsf{R}_1}\ootimes \mathcal{A}_{\mathsf{R}_2}$
\\
\hline
States of $\mathsf{R}$: density operators on $\mathcal{H}_{\mathsf{R}}$ &
States of $\mathsf{R}$: unital positive linear functionals on $\mathcal{A}_{\mathsf{R}}$ \\
\hline
Dynamics of $\mathsf{R}$: unitary operators on $\mathcal{H}_{\mathsf{R}}$ &
Dynamics of $\mathsf{R}$: $*$-automorphisms of $\mathcal{A}_{\mathsf{R}}$
\\
\hline
Measurements of $\mathsf{R}$: POVMs on $\mathcal{H}_{\mathsf{R}}$ &
Measurements of $\mathsf{R}$: POVMs with elements from $\mathcal{A}_{\mathsf{R}}$
\\
\hline
\end{tabular}
\caption{Conventional quantum circuit model vs.\ C*-circuit model}
\end{figure}

The gates acting on a register are the reversible \emph{dynamics} of the register and the $*$-automorphisms are the most general such operations that preserve the algebraic structure and norm of a register's C*-algebra (in analogy with unitary operations in the conventional model, which are the most general operations that preserve the algebraic structure and norm of a register's Hilbert space).
Our framework resembles that of \emph{C*-dynamical systems}, which are C*-algebras combined with sets of $*$-automorphisms acting on them%
\footnote{
More precisely, a \emph{C*-dynamical system} is a triple of the form $(\mathcal{R},G,\alpha)$, where $\mathcal{R}$ is a C*-algebra, $G$ is a locally compact group, and $\alpha$ is a continuous action of $G$ on 
the $*$-automorphisms of $\mathcal{R}$.
}; 
however, in the C*-circuit model there are various localization conditions imposed on the dynamics (i.e., each gate acts on a subset of the registers).

The above definition is a basic model where the gates are reversible and where measurements produce classical outcomes but no residual (or ``collapsed") quantum states.
This model is complete in that channels and measurements that produce residual states can be defined in a Stinespring form (as $*$-automorphisms on a larger system).
Kraus operators (that need not be elements of the C*-algebra) can also be defined, though we do not do that here.

%------------------------------------------------------------------------------%
\subsection{Informal definition of self-embezzlement}

Alice and Bob each have two quantum systems, call them $\mathsf{A}_1$, $\mathsf{A}_2$ and $\mathsf{B}_1$, $\mathsf{B}_2$ (respectively).
In computer science terminology, we call these \emph{registers}.
First, we define exact self-embezzlement, followed be $\epsilon$-approximate self-embezzlement.

Definition of \emph{exact self-embezzlement}:
\begin{itemize}
\item
There is some catalyst state $\psi$ that satisfies a \emph{nontriviality} condition that rules out product states and states close to product states.
The condition is that $\psi$ can be used to maximally violate the CHSH inequality by a factor of $\sqrt{2}$.
The catalyst state $\psi$ is allowed to be any pure state that satisfies this property.

\item
The \emph{initial} joint state of Alice and Bob's respective first registers ($\mathsf{A}_1$ and $\mathsf{B}_1$) is the catalyst $\psi$.
The initial joint state of their respective second registers ($\mathsf{A}_2$ and $\mathsf{B}_2$) is some product state, such as $\phi_A\otimes \phi_B$.
So the initial state of $(\mathsf{A}_1,\mathsf{B}_1,\mathsf{A}_2,\mathsf{B}_2)$ is $\psi\otimes(\phi_A\otimes\phi_B)$.

\item
Alice and Bob are each allowed to perform operations that are local to their registers.
For Alice, this is the compound register $(\mathsf{A}_1,\mathsf{A}_2)$.
For Bob, this is the compound register $(\mathsf{B}_1,\mathsf{B}_2)$.
\item
The \emph{final} state of $(\mathsf{A}_1,\mathsf{B}_1,\mathsf{A}_2,\mathsf{B}_2)$ (after they apply their local operations) is $\psi\otimes\psi$.
(That is, $(\mathsf{A}_1,\mathsf{B}_1)$ is in state $\psi$ and $(\mathsf{A}_2,\mathsf{B}_2)$ is in state $\psi$.)

\end{itemize}

\begin{figure}[h!]
\centering
\begin{minipage}{0.85\textwidth}
\centering
\setlength{\unitlength}{0.6mm}
%------------------------------------------------------------------------------%
\begin{picture}(9,50)(0,0)
\put(2,45){\makebox(0,0){{\small $\mathsf{A}_1$}}}
\put(2,35){\makebox(0,0){{\small $\mathsf{A}_2$}}}
\put(2,15){\makebox(0,0){{\small $\mathsf{B}_1$}}}
\put(2,5){\makebox(0,0){{\small $\mathsf{B}_2$}}}
\end{picture}
%------------------------------------------------------------------------------%
\begin{picture}(85,50)(0,0)
\linethickness{1pt}
\put(20,45){\line(1,0){10}}
\put(20,35){\line(1,0){10}}
\put(20,15){\line(1,0){10}}
\put(20,5){\line(1,0){10}}
\put(30,0){\line(1,0){20}}
\put(30,20){\line(1,0){20}}
\put(30,30){\line(1,0){20}}
\put(30,50){\line(1,0){20}}
\put(30,0){\line(0,1){20}}
\put(30,30){\line(0,1){20}}
\put(50,0){\line(0,1){20}}
\put(50,30){\line(0,1){20}}
\put(50,45){\line(1,0){10}}
\put(50,35){\line(1,0){10}}
\put(50,15){\line(1,0){10}}
\put(50,5){\line(1,0){10}}
\multiput(5,30)(1,1){15}
{\line(0,1){0.4}}
\multiput(5,30)(1,-1){15}
{\line(0,1){0.4}}
\multiput(60,44.8)(1,-1){16}
{\line(0,1){0.4}}
\multiput(60,14.8)(1,1){16}
{\line(0,1){0.4}}
\multiput(60,34.8)(1,-1){16}
{\line(0,1){0.4}}
\multiput(60,4.8)(1,1){16}
{\line(0,1){0.4}}
\put(39,40){\makebox(0,0){{\large $\mathcal O$}{\!$_A$}}}
\put(39,10){\makebox(0,0){{\large $\mathcal O$}{\!$_B$}}}
\put(1.5,30){\makebox(0,0){{\small $\psi$}}}
\put(16.5,35){\makebox(0,0){{\small $\phi$}{\tiny \!$_A$}}}
\put(16.5,5){\makebox(0,0){{\small $\phi$}{\tiny \!$_B$}}}
\put(78.5,30.3){\makebox(0,0){{\small $\psi$}}}
\put(78.5,20.3){\makebox(0,0){{\small $\psi$}}}
\end{picture}
%------------------------------------------------------------------------------%
\caption{Circuit diagram for self-embezzlement. In the conventional model, $\mathcal O_A$ and $\mathcal O_B$ are unitary operators acting on registers $(\mathsf{A}_1,\mathsf{A}_2)$ and $(\mathsf{B}_1,\mathsf{B}_2)$, respectively.
In the C*-circuit model, $\mathcal O_A$ and $\mathcal O_B$ are $*$-automorphisms acting on $\mathcal A_1 \otimes_{\min} \mathcal A_2$ and $\mathcal B_1 \otimes_{\min} \mathcal B_2$ (the C*-algebras associated with $(\mathsf{A}_1,\mathsf{A}_2)$ and $(\mathsf{B}_1,\mathsf{B}_2)$), respectively.}
\label{fig:AE}
\end{minipage}
\end{figure}

Next we define \emph{$\epsilon$-approximate self-embezzlement} as the following relaxation of the above.
First of all, the violation of CHSH need only be by a factor of $\sqrt{2} - \epsilon$ (as opposed to the maximum violation of $\sqrt{2}$).
Second, when Alice and Bob apply their local operations to the initial state $\psi\otimes(\phi_A\otimes\phi_B)$, they need only obtain an approximation of the state $\psi\otimes\psi$ within fidelity $1 - \epsilon$.

In the next two subsections, we present precise definitions of approximate self-embezzlement (to match the results in section~\ref{sec:nogo}) and exact self-embezzlement (to match the results in section~\ref{sec:yesgo}).

%------------------------------------------------------------------------------%
\subsection{Definition of approximate self-embezzlement in the conventional model}\label{sec:def-AE-mixed-nogo}

\noindent
Define an \emph{$\epsilon$-precision self-embezzling scheme} 
to be a tuple $(\mathcal{H}_A,\mathcal{H}_B,\psi,\phi_A,\phi_B,U_A,U_B)$, where:

\begin{enumerate}
\item
$\mathcal{H}_A$ and $\mathcal{H}_B$ are Hilbert spaces.
(Alice has two registers, which we call $\mathsf{A}_1$ and $\mathsf{A}_2$, associated with $\mathcal{H}_A$, and Bob has two registers, which we call $\mathsf{B}_1$ and $\mathsf{B}_2$, associated with $\mathcal{H}_B$.)
\item
$\psi$ is a normalized vector in $\mathcal{H}_A\otimes\mathcal{H}_B$ such that state $\psi$ can be used to violate the CHSH inequality by factor $\sqrt{2} - \epsilon$.
We call $\psi$ the \emph{catalyst}.
\item
$\phi_A$ and $\phi_B$ are normalized vectors in $\mathcal{H}_A$ and $\mathcal{H}_B$ respectively (with no restriction).
\item
$U_A$ is a unitary operation on $\mathcal{H}_A\otimes\mathcal{H}_A$ and $U_B$ is a unitary operator on $\mathcal{H}_B\otimes\mathcal{H}_B$.
Applying $U_A\otimes U_B$ to system $((\mathsf{A}_1,\mathsf{A}_2),(\mathsf{B}_1,\mathsf{B}_2))$ has the following property:
in the context of system $(\mathsf{A}_1,\mathsf{B}_1,\mathsf{A}_2,\mathsf{B}_2)$, it maps state $\psi\otimes(\phi_A\otimes\phi_B)$ to state $\psi\otimes\psi$ within fidelity $1-\epsilon$.
\end{enumerate}

%------------------------------------------------------------------------------%
\subsection{Definition of exact self-embezzlement in the C*-circuit model}\label{sec:def-AE-pure-yesgo}

Define an \emph{self-embezzling scheme} 
to be a tuple of the form $(\mathcal{A},\mathcal{B},\psi,\phi_A,\phi_B,\alpha_A,\alpha_B)$, where:
\begin{enumerate}
\item
$\mathcal{A}$ and $\mathcal{B}$ are C*-algebras.
(Alice has two registers, which we call $\mathsf{A}_1$ and $\mathsf{A}_2$, associated with $\mathcal{A}$, and Bob has two registers, which we call $\mathsf{B}_1$ and $\mathsf{B}_2$, associated with $\mathcal{B}$.)
\item
$\psi : \mathcal{A}\otimes_{\min}\mathcal{B} \rightarrow \mathbb{C}$ can be any pure state which has the property that $\psi$ can be used to maximally violate the CHSH inequality%
\footnote{CHSH in the framework where Alice is allowed to perform POVM measurements with elements in $\mathcal{A}$, and similarly for Bob with $\mathcal{B}$.}
(by factor $\sqrt{2}$).
We call $\psi$ the \emph{catalyst}.
\item
$\phi_A : \mathcal{A} \rightarrow \mathbb{C}$ and $\phi_B : \mathcal{B} \rightarrow \mathbb{C}$ are states (with no restriction).
\item
$\alpha_A$ is a $*$-automorphism on $\mathcal{A}\otimes_{\min}\mathcal{A}$ and $\alpha_B$ is a $*$-automorphism on $\mathcal{B}\otimes_{\min}\mathcal{B}$.
\item\label{enumerate:def}
Applying $\alpha_A\otimes\alpha_B$ to system $(\mathsf{A}_1,\mathsf{A}_2,\mathsf{B}_1,\mathsf{B}_2)$ has the following property:
in the context of system $(\mathsf{A}_1,\mathsf{B}_1,\mathsf{A}_2,\mathsf{B}_2)$, it maps state $\psi\otimes(\phi_A\otimes\phi_B)$ to state $\psi\otimes\psi$.
\end{enumerate}

%------------------------------------------------------------------------------%
\section{There exists $\epsilon_0 > 0$ such that, in the conventional model, self-embezzlement with precision $\le \epsilon_0$ is impossible}\label{sec:nogo}

%------------------------------------------------------------------------------%
\subsection{The case of unitary operations}

Without loss of generality, a pure catalyst state is of the form 
\begin{align}\label{eq:tensor}
\psi
= \sum_{k=1}^{\infty} \lambda_k \ket{k}\otimes\ket{k},
\end{align}
where $\lambda_1 \ge \lambda_2 \ge \cdots$ and $\sum_k |\lambda_k|^2 = 1$, 
and the initial states of $\phi_A$ and $\phi_B$ are $\ket{1}$.

The initial state of $(\mathsf{A}_1,\mathsf{B}_1,\mathsf{A}_2,\mathsf{B}_2)$ is 
\begin{align}
\psii
= \psi\otimes(\phi_A\otimes\phi_B)
= \left(\sum_{k=1}^{\infty} \lambda_k \ket{k}\otimes\ket{k}\right)\otimes \Bigl(\ket{1}\otimes\ket{1}\Bigr).
\end{align}
In a purported self-embezzlement scheme, Alice applies a local unitary $U_A$ on register $(\mathsf{A}_1,\mathsf{A}_2)$ and Bob $U_B$ on register $(\mathsf{B}_1,\mathsf{B}_2)$.
We shall bound the fidelity between $(U_A \otimes U_B)\psii$ and
%\end{document}

\begin{align}\label{eq:psixpsiV0}
\psit
= \psi \otimes \psi
= \Biggl(\sum_{j=1}^{\infty} \lambda_j \ket{j}\otimes\ket{j}\Biggr)\otimes\left(\sum_{k=1}^{\infty} \lambda_k \ket{k}\otimes\ket{k}\right).
\end{align}

\begin{thm}\label{thm:main-nogo}
There exists a constant $\epsilon_0 > 0$ such that, for any $\psi$ that is $(\sqrt{2}-\epsilon_0)$-CHSH violating, for any local unitary operations $U_A$ and $U_B$, the trace distance between $(U_A \otimes U_B)\psii$ and $\psit$ is at least~$\frac{2}{9}$.
\end{thm}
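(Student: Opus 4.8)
The plan is to compare the Schmidt structure of the two states $(U_A\otimes U_B)\psii$ and $\psit$ across the Alice/Bob cut, and to extract a contradiction from the interplay between (i) the requirement that $(\mathsf{A}_1,\mathsf{B}_1)$ ends up in state $\psi$, which is nearly maximally CHSH-violating, and (ii) the requirement that $(\mathsf{A}_2,\mathsf{B}_2)$ simultaneously ends up in state $\psi$. The key point is that $\psii = \psi \otimes \ket{1}_{\mathsf A_2}\otimes\ket{1}_{\mathsf B_2}$ has, across the Alice--Bob cut, exactly the same Schmidt coefficients $(\lambda_k)$ as $\psi$ itself, whereas $\psit = \psi\otimes\psi$ has Schmidt coefficients $(\lambda_j\lambda_k)_{j,k}$ — the pointwise products. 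Since $U_A\otimes U_B$ is local across the Alice--Bob cut, it preserves Schmidt coefficients. So if self-embezzlement held exactly, the multiset $\{\lambda_k\}$ would equal the multiset $\{\lambda_j\lambda_k\}$, which forces $\lambda_1 = \lambda_1^2$, hence $\lambda_1 \in \{0,1\}$; $\lambda_1 = 1$ means $\psi$ is a product state (not CHSH-violating at all), contradiction. The job is to make this rigidity quantitative: a state that is $(\sqrt2-\epsilon)$-CHSH-violating cannot have $\lambda_1$ too close to $1$ (otherwise it is close to a product state), yet approximate self-embezzlement to fidelity $1-\epsilon$ forces the Schmidt spectra of $(\lambda_k)$ and $(\lambda_j\lambda_k)$ to be close, which in turn forces $\lambda_1$ close to $\lambda_1^2$, i.e. $\lambda_1$ close to $0$ or $1$.

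Concretely, I would proceed as follows. First, a lemma translating CHSH violation into a bound on the Schmidt vector: if $\psi = \sum_k \lambda_k\ket{k}\ket{k}$ is $(\sqrt2-\epsilon)$-CHSH-violating, then $\lambda_1 \le 1 - c$ for some absolute constant $c > 0$ (equivalently, $\sum_{k\ge 2}\lambda_k^2 \ge c'$), since the reduced density matrix $\rho = \sum_k\lambda_k^2\ketbra{k}{k}$ being close to rank one forces $\psi$ close to a product vector, and product vectors give CHSH value $\le$ something bounded away from $\sqrt2$ (via the Tsirelson-type bound / continuity of the CHSH functional in the state). Second, a lemma quantifying Schmidt-coefficient stability under local unitaries and fidelity: if $F\big((U_A\otimes U_B)\psii,\ \psit\big) \ge 1-\epsilon$, then, writing the Schmidt coefficient vectors as $\mu = (\lambda_k)_k$ and $\nu = (\lambda_j\lambda_k)_{j,k}$ (each as a decreasing $\ell^2$-normalized sequence), the vectors $\mu$ and $\nu$ are within $O(\sqrt\epsilon)$ in $\ell^2$. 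This uses the fact that $(U_A\otimes U_B)$ does not change the $\mu$ vector of $\psii$, plus the standard relation between fidelity of two pure states and the $\ell^2$-distance of their (aligned, decreasing) Schmidt vectors — the fidelity is at most $\big(\sum_k \sqrt{\mu_k\nu_k}\big)^{\!2}$, and the gap between this and $1$ controls $\|\mu-\nu\|_2$. Third, combine: $\|\mu-\nu\|_2$ small, and in particular $|\mu_1 - \nu_1| = |\lambda_1 - \lambda_1^2|$ is $O(\sqrt\epsilon)$, so $\lambda_1(1-\lambda_1) = O(\sqrt\epsilon)$; since Step 1 gives $\lambda_1 \le 1-c$, we get $\lambda_1 \le O(\sqrt\epsilon)/c$, so the largest Schmidt coefficient of $\psi$ is tiny, forcing $\psi$ to have essentially infinite Schmidt rank "spread out", but then $\|\mu - \nu\|_2$ cannot actually be small because $\nu$ is more spread out than $\mu$ still — more precisely one compares $\sum_k \mu_k^4 = \sum_k \lambda_k^4$ against $\sum_{j,k}\nu_{jk}^4 = (\sum_k\lambda_k^4)^2$ and derives a contradiction for $\epsilon$ small. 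I would choose $\epsilon_0$ so that all these $O(\sqrt\epsilon)$ terms are small enough that the final inequality fails, and then convert the contradiction into the stated trace-distance bound $\ge \tfrac29$ by tracking constants (fidelity $\le 1-\epsilon_0$ gives trace distance $\ge \sqrt{1-(1-\epsilon_0)^2}$, and choosing $\epsilon_0$ large enough that this exceeds $\tfrac29$, while still small enough for the rigidity argument to bite; if both cannot hold simultaneously one instead argues directly that the fidelity is bounded by a universal constant $<1$ and then reads off $\tfrac29$).

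The main obstacle I anticipate is Step 2 — getting a clean, dimension-free relation "$F\ge 1-\epsilon \Rightarrow \|\mu-\nu\|_2 = O(\sqrt\epsilon)$" that is genuinely about the Schmidt spectra and not about the particular bases, together with the fact that $\psi$ lives on an infinite-dimensional space so one must be careful that sums converge and that "decreasing rearrangement" is the right alignment. A secondary subtlety is Step 1: one needs CHSH in the bipartite-state formulation to be continuous in the state in a way that is uniform over the (unbounded) choice of observables, which is fine because the CHSH operator has norm $\le 2$, so the violation is a Lipschitz function of the reduced state in trace norm; but one must phrase this so that "$\psi$ close to product $\Rightarrow$ small violation" with an explicit constant. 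If the quantitative rigidity turns out to be messier than hoped, a fallback is to prove a weaker nonzero trace-distance bound by a compactness/limiting argument — a sequence of $\epsilon_n$-schemes with $\epsilon_n\to0$ would, after passing to a limit, yield an exact self-embezzlement scheme in the conventional model with a CHSH-violating catalyst, contradicting the exact Schmidt-multiset identity $\{\lambda_k\} = \{\lambda_j\lambda_k\}$ above — but this gives no explicit $\epsilon_0$, so for the stated theorem the quantitative route is the one to push through.
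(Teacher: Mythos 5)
Your setup is sound and matches the paper's first moves: across the Alice--Bob cut $\psii$ has Schmidt coefficients $(\lambda_k)$ and $\psit$ has $(\lambda_j\lambda_k)$, local unitaries preserve these, and (via Vidal--Jonathan--Nielsen, which the paper also invokes) the optimal fidelity is governed purely by how well a rearrangement of one spectrum can match the other; CHSH rigidity bounds $\lambda_1$ away from $1$ (the paper gets the sharper $\lambda_1 \le \frac{1}{\sqrt 2}+O(\sqrt{\epsilon_0})\le\sqrt{2/3}$ from~\cite{ReichardtUV2013,Kaniewski2016}). The gap is in your final step, which is the actual crux of the theorem. You propose to derive a contradiction from two spectral comparisons: $|\lambda_1-\lambda_1^2|$ small forces $\lambda_1$ near $0$ or $1$, and then $|\sum_k\lambda_k^4-(\sum_k\lambda_k^4)^2|$ small is supposed to finish the job. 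It does not: once $\lambda_1$ is small, set $t=\sum_k\lambda_k^4\le\lambda_1^2$, so both $t$ and $t^2$ are automatically tiny and the $4$-norm (purity) comparison carries no information. More generally, in the spread-out regime ($\lambda_1$ small) every fixed moment of $(\lambda_k)$ and of $(\lambda_j\lambda_k)$ is close, so no comparison of low moments can yield the needed contradiction --- and this is precisely the interesting regime, since embezzling-type catalysts are exactly such spread-out states and they \emph{can} approximately produce any fixed target; the obstruction is specific to the target being a second copy of the catalyst.

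What is missing is a quantitative statement that no bijective rearrangement $\pi$ of $p=(\lambda_k^2)$ can be close to $p\otimes p$ whenever $p_1\le 2/3$, uniformly over all such $p$ (including $p_1\to 0$). The paper's Lemma~\ref{thm:half} supplies exactly this, using partial sums rather than moments: choose $m$ maximal with $p_1+\cdots+p_m\le\frac23$, so $\mu:=p_1+\cdots+p_m\in(\frac13,\frac23]$; the $m$ largest entries of $\pi p$ carry mass exactly $\mu$, while the $m$ largest entries of $p\otimes p$ all lie in $\{1,\dots,m\}\times\{1,\dots,m\}$ and hence carry mass at most $\mu^2$, giving variation distance at least $\mu-\mu^2\ge\frac29$, hence trace distance at least $\frac29$ by measuring in the Schmidt basis. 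This also delivers the stated constant directly, for every scheme, rather than the weaker ``fidelity cannot approach $1$'' that your contradiction-based framing would give even if repaired. Your fallback compactness argument has the same defect (no explicit $\epsilon_0$ and no constant), as you note. To fix your proof, replace the moment comparison with a top-$m$ partial-sum (majorization-type) comparison of the two sorted spectra; the rest of your outline then aligns with the paper's proof.
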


\begin{proof}
If $\psi$ can be used to violate the CHSH inequality by a factor of $\sqrt{2}-\epsilon$ then, by the rigidity results in~\cite{ReichardtUV2013}, there exist local unitary operations that map  $\psi$ within distance $O(\sqrt{\epsilon})$ from a state of the form $(\frac{1}{\sqrt 2}\ket{00}+\frac{1}{\sqrt 2}\ket{11})\otimes \psi'$.
This implies that the largest Schmidt coefficient of $\psi$ satisfies $\lambda_1 \le \frac{1}{\sqrt 2}+O(\sqrt{\epsilon})$.
Set $\epsilon_0 > 0$ to be sufficiently small%
\footnote{Using the bounds in~\cite{Kaniewski2016}, it can be calculated that it suffices to set $\epsilon_0 \le \frac{1}{50}$.}
so that $\lambda_1 \le \sqrt{2/3}$.

Using the fact that $\lambda_1 \le \sqrt{2/3}$, we shall show that, for any local unitaries $U_A$ and $U_B$, the trace distance between $(U_A \otimes U_B)\psii$ and $\psit$ is at least $\frac{2}{9}$.

Expressed as states on $((\mathsf{A}_1,\mathsf{A}_2),(\mathsf{B}_1,\mathsf{B}_2))$, the initial state and target states are 
\begin{align}
\psii
&= \sum_{j=1}^{\infty}\sum_{k=1}^{\infty} \lambda_j \delta_{k,1} 
\Bigl(\ket{j}\otimes\ket{k}\Bigr)\otimes\Bigl(\ket{j}\otimes\ket{k}\Bigr)\label{eq:initial}\\
\psit
&= \sum_{j=1}^{\infty}
\sum_{k=1}^{\infty} \lambda_j\lambda_k 
\Bigl(\ket{j}\otimes\ket{k}\Bigr)\otimes\Bigl(\ket{j}\otimes\ket{k}\Bigr),\label{eq:target}
\end{align}
where $\delta_{i,j}$ is the Kronecker-delta function.

%Alice and Bob can apply local unitaries $U_A$ and $U_B$ and the we are considering how close in fidelity $(U_A \otimes U_B)\psii$ can be to $\psit$.
By Lemma~1 in~\cite{VidalJN2000}, the fidelity is maximized when 
the Schmidt-basis states are the same and the Schmidt coefficients are lined up in terms of magnitude (i.e., largest with largest, second largest with the second largest, etc.).
Thus, we need only consider unitary operations $U_A$ and $U_B$ that each apply a permutation of the basis states $\{\ket{j}\otimes\ket{k}\}_{j,k}$ ($U_A$ on register $(\mathsf{A}_1,\mathsf{A}_2)$ and $U_B$ the same permutation on register $(\mathsf{B}_1,\mathsf{B}_2)$).

To analyze this, we first consider a related statement about probability distributions.
Let $p = (p_1, p_2, \dots)$ be a probability distribution on $\mathbb N$ (possibly of finite support) and assume that $p_1 \ge p_2 \ge \dots$.
We consider how close a rearrangement of the probabilities in $p$ can be to $p \otimes p$.
A rearrangement can be defined as a bijection $\pi : \mathbb{N}\times\mathbb{N} \rightarrow \mathbb{N}$ where $\pi p$ is the probability distribution on $\mathbb{N}\times\mathbb{N}$ defined as $(\pi p)(j,k) = p_{\pi(j,k)}$.

Recall that the \emph{variation distance} between distributions $p$ and $q$ is defined as $\half\|p-q\|_1 = \half\sum_k |p_k-q_k|$.

\begin{lemma}\label{thm:half}
If $p_1 \le \frac{2}{3}$ and $\pi : \mathbb{N}\times\mathbb{N} \rightarrow \mathbb{N}$ is any bijection then the variation distance between $\pi p$ and $p\otimes p$ is at least $2/9$.
%$\half|p\otimes p - \pi(p)|_1 \ge 2/9$.
\end{lemma}

\begin{proof}[Proof (of Lemma~\ref{thm:half}).]
Define $m = \max\{m \in \mathbb{N} : p_1 + \dots + p_m \le \frac {2}{3}\}$ and 
$S = \{1,\dots,m\}$.
Then
\begin{align}
\frac{1}{3} < p(S) \le \frac{2}{3}.
\end{align}
The first inequality follows because, if $p_1 + \cdots + p_m \le \frac{1}{3}$ then $p_{m+1} \le \frac{1}{3}$, so $p_1 + \cdots + p_{m+1} \le \frac{2}{3}$.

Define $\mu = p(S)$.
Consider the set $T$, defined as the $m$ largest components of $p\otimes p$.
We next show that $p(T) \le \mu^2$.
To see why this is so, note that 
\begin{align}
(p\otimes p)\bigl(\{1,\dots,m\}\times\{1,\dots,m\}\bigr) = 
(p_1 + p_2 + \cdots + p_m)(p_1 + p_2 + \cdots + p_m) =
\mu^2
\end{align}
and also that it is straightforward to show that 
\begin{align}
T \subseteq \{(j,k) \in \mathbb{N}\times\mathbb{N} : j+k \le m+1\}
\subseteq \{1,\dots,m\}\times\{1,\dots,m\}.
\end{align}
This is because, if $(j',k') \not\in \{(j,k) \in \mathbb{N}\times\mathbb{N} : j+k \le m+1\}$ then there are at least $j'k'-1$ elements of $T$ that are larger than $p_{j'}p_{k'}$.

Therefore, the variation distance between the $m$ largest components of $p \otimes p$ and $\pi p$ is at least $\half(\mu - \mu^2) = \half\mu(1-\mu) \ge \half(\frac{2}{3})(\frac{1}{3}) = \frac{1}{9}$.
It follows that the variation distance between (all components of) 
$p\otimes p$ and $\pi p$ is at least $\half(\mu - \mu^2) + \half((1-\mu^2)-(1-\mu))=\frac{2}{9}$.
This completes the proof of Lemma~\ref{thm:half}.
\end{proof}

Returning to the proof of Theorem~\ref{thm:main-nogo}, Lemma~\ref{thm:half} implies that if $U_A$ and $U_B$ are permutations of the basis states of $\psii$ then one particular way of distinguishing between 
$(U_A\otimes U_B)\psii$ and $\psit$ (based on first measuring the state in the computational basis) distinguishes with probability at least $\frac{2}{9}$.
This implies that the trace distance between $(U_A\otimes U_B)\psii$ and $\psit$ is at least $\frac{2}{9}$.
This completes the proof of Theorem~\ref{thm:main-nogo}.
\end{proof}

\begin{cor}
There exists a constant $\epsilon_0 > 0$ such that, for any $\psi$ that is $(\sqrt{2}-\epsilon_0)$-violating, for any local unitary operations $U_A$ and $U_B$, the fidelity between $(U_A \otimes U_B)\psii$ and $\psit$ is at most $\sqrt{1-(2/9)^2} < 0.974996 < 39/40$.
\end{cor}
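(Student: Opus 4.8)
The plan is to obtain this Corollary as an immediate consequence of Theorem~\ref{thm:main-nogo} together with the relationship between trace distance and fidelity that holds for a pair of \emph{pure} states. First I would note that both states being compared are pure: $\psit = \psi\otimes\psi$ is a unit vector, and $(U_A\otimes U_B)\psii$ is the image of the unit vector $\psii$ under a unitary, hence also a unit vector. So we are in the pure-state setting throughout.

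Next I would record the standard fact that, for two pure states $\ket{\alpha}$ and $\ket{\beta}$, the trace distance $D$ between $\ketbra{\alpha}{\alpha}$ and $\ketbra{\beta}{\beta}$ and the fidelity $F = |\bracket{\alpha}{\beta}|$ satisfy $D = \sqrt{1-F^{2}}$; this follows by restricting to the (at most two-dimensional) span of $\ket{\alpha}$ and $\ket{\beta}$ and diagonalizing the resulting difference of two rank-one projections. (Only the inequality $D \le \sqrt{1-F^{2}}$, which is one of the general Fuchs--van de Graaf bounds, is actually needed, so purity is a convenience rather than a necessity here.) Applying this with $\ket{\alpha} = (U_A\otimes U_B)\psii$ and $\ket{\beta} = \psit$, and invoking Theorem~\ref{thm:main-nogo} with the same constant $\epsilon_0$ (e.g.\ $\epsilon_0 \le \tfrac{1}{50}$), which gives $D \ge \tfrac{2}{9}$, we obtain $1 - F^{2} = D^{2} \ge \tfrac{4}{81}$, hence $F \le \sqrt{1-(2/9)^{2}} = \tfrac{\sqrt{77}}{9}$.

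It then remains only to verify the numerical inequalities. The clean exact check is $\tfrac{\sqrt{77}}{9} < \tfrac{39}{40}$, which (squaring, both sides positive) is equivalent to $\tfrac{77}{81} < \tfrac{1521}{1600}$, i.e.\ to $77\cdot 1600 = 123200 < 123201 = 81\cdot 1521$; and a short computation gives $\tfrac{\sqrt{77}}{9} = 0.97499\ldots$, matching the intermediate decimal estimate in the statement. There is essentially no obstacle in this Corollary: it is a purely numerical rider to Theorem~\ref{thm:main-nogo}. The only point deserving a moment's attention is to use the pure-state relation $D = \sqrt{1-F^{2}}$ (equivalently the Fuchs--van de Graaf upper bound $D \le \sqrt{1-F^{2}}$) rather than the complementary bound $1-F \le D$, since only the former converts the trace-distance lower bound $\tfrac{2}{9}$ into the fidelity bound $\sqrt{1-(2/9)^{2}}$ advertised here.
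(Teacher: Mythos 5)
Your proposal is correct and matches the paper's intent: the paper states this corollary without a separate proof, treating it exactly as you do — an immediate consequence of Theorem~\ref{thm:main-nogo} via the pure-state relation between trace distance and fidelity, $D=\sqrt{1-F^{2}}$ (or just the Fuchs--van de Graaf bound $D\le\sqrt{1-F^{2}}$), which converts $D\ge\frac{2}{9}$ into $F\le\sqrt{1-(2/9)^{2}}=\frac{\sqrt{77}}{9}<\frac{39}{40}$. Your numerical verification ($77\cdot 1600=123200<123201=81\cdot 1521$) is also correct.
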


\def \HH {\mathcal H}
\def \I {\mathcal I}
\def \swap {\text{swap}}
\subsection{The case of channels} 
It turns out that even if Alice and Bob are allowed to use channels (completely positive trace preserving maps) instead of unitaries, they are still not able to perform approximate self-embezzlement.
More formally, using the same notation as the previous subsection, and let $\mathcal N_A$ be a channel on $(\mathsf A_1, \mathsf A_2)$, $\mathcal N_B$ be a channel on $(\mathsf B_1,\mathsf B_2)$, we have 
\begin{lemma}
  \label{lemma:pure_channel_nogo}
  If $\psi$ is a $(\sqrt 2 - \epsilon_0)$-CHSH violating state for some constant $\epsilon_0>0$, then there exist some threshold $\epsilon > 0$, where no local channels $\mathcal N_A$ and $\mathcal N_B$ can achieve 
  \begin{align}\label{eq:pure_channel_nogo}
    \langle{\psi_{\text{\rm target}}},\,\mathcal N_A\otimes\mathcal N_B(\psi_{\text{\rm initial}} \psi_{\text{\rm initial}}^*){\psi_{\text{\rm target}}\rangle} &>1-\epsilon.
  \end{align}
\end{lemma}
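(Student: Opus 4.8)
The plan is to reduce the channel case to the unitary case already handled in Theorem~\ref{thm:main-nogo} by a standard Stinespring dilation argument. First I would dilate each channel: write $\mathcal N_A(\cdot) = \Tr_{\mathsf E_A}\bigl(V_A (\cdot) V_A^*\bigr)$ where $V_A : \HHH_A\otimes\HHH_A \to \HHH_A\otimes\HHH_A\otimes\HHH_{\mathsf E_A}$ is an isometry into a system enlarged by a local ancilla $\mathsf E_A$ (initialized in a fixed pure state), and similarly $\mathcal N_B$ with ancilla $\mathsf E_B$. Since $\mathcal N_A$ acts only on Alice's side, the ancilla $\mathsf E_A$ can be absorbed into Alice's local Hilbert space, and likewise $\mathsf E_B$ into Bob's; the isometries can be padded to unitaries $U_A, U_B$ by enlarging the spaces further, still respecting the bipartite cut. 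Thus $\mathcal N_A\otimes\mathcal N_B(\psii\psii^*)$ is the partial trace over $\mathsf E_A\otimes\mathsf E_B$ of the pure state $(U_A\otimes U_B)(\psii\otimes\ket{0}_{\mathsf E})(\cdots)^*$.

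Next I would use monotonicity of fidelity under the partial trace (equivalently, under the channel $\Tr_{\mathsf E_A\otimes\mathsf E_B}$): the fidelity in \eqref{eq:pure_channel_nogo} between $\psit$ (a pure state on $(\mathsf A_1,\mathsf B_1,\mathsf A_2,\mathsf B_2)$) and $\mathcal N_A\otimes\mathcal N_B(\psii\psii^*)$ is at most the fidelity between $\psit\otimes(\text{anything on }\mathsf E)$ and the dilated pure state $(U_A\otimes U_B)(\psii\otimes\ket{00}_{\mathsf E})$ — more precisely, at most $\max_{\xi}F\bigl(\psit\otimes\xi,\,(U_A\otimes U_B)(\psii\otimes\ket{00}_{\mathsf E})\bigr)$ over pure states $\xi$ on $\mathsf E_A\otimes\mathsf E_B$. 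But now the right-hand side is exactly a fidelity between two pure states in a unitary self-embezzlement scenario: the catalyst is still $\psi$ (still $(\sqrt2-\epsilon_0)$-CHSH violating), the "junk" registers $\phi_A, \phi_B$ are now $\phi_A\otimes\ket{0}_{\mathsf E_A}$ and $\phi_B\otimes\ket{0}_{\mathsf E_B}$ (still product, still unrestricted), and the target on the second pair of registers is $\psi\otimes\xi$. The only wrinkle is that the target is no longer exactly $\psi\otimes\psi$ on just $(\mathsf A_2,\mathsf B_2)$ but $\psi$ on $(\mathsf A_2,\mathsf B_2)$ tensored with an arbitrary $\xi$ on the ancillas; however Lemma~\ref{thm:half} only used a lower bound on the mismatch between the Schmidt spectrum of $\psi$ and its self-tensor, and appending an independent pure state $\xi$ on disjoint registers does not help — one can re-run the argument treating $(\mathsf A_2,\mathsf E_A)$ as Alice's full second register, noting the required distribution-rearrangement bound is unchanged because $\lambda_1(\psi)\le\sqrt{2/3}$ still controls the relevant sums.

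Concretely, the key steps in order are: (1) Stinespring-dilate $\mathcal N_A$ and $\mathcal N_B$ to local isometries with local ancillas, then pad to local unitaries $U_A\otimes U_B$ on the enlarged bipartite system; (2) invoke monotonicity of fidelity under partial trace to bound \eqref{eq:pure_channel_nogo} by a pure-state fidelity between $(U_A\otimes U_B)\psii'$ and $\psit'$, where the primed states are the original ones with product ancilla factors appended; (3) observe that Theorem~\ref{thm:main-nogo} — or rather its proof via Lemma~\ref{thm:half} together with Lemma~1 of~\cite{VidalJN2000} — applies verbatim to this enlarged scheme, because the nontriviality bound $\lambda_1\le\sqrt{2/3}$ is a property of $\psi$ alone and the target's Schmidt spectrum on the relevant cut is still that of $\psi\otimes\psi$ (the ancilla contributes a pure tensor factor that does not create the needed overlap); (4) conclude the trace distance, hence the fidelity gap, persists, so one may take the same $\epsilon_0$ (or any $\epsilon < (2/9)^2$ suffices for the fidelity statement, via the standard fidelity–trace-distance inequality $F \le \sqrt{1 - D^2}$ comparing $\psit$ to the mixed output).

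The main obstacle I anticipate is bookkeeping in step (3): one must verify carefully that the optimization in Lemma~1 of~\cite{VidalJN2000} (aligning Schmidt coefficients and bases) still reduces the enlarged problem to a rearrangement of a probability distribution whose top entry is $\le 2/3$, even though the output state on the ancillas is completely unconstrained. The cleanest way around this is to absorb $\xi$'s Schmidt structure into the "$p$" distribution: the target's squared Schmidt coefficients on Alice's cut are $\{\lambda_j^2\lambda_k^2\,\nu_\ell\}$ where $\{\nu_\ell\}$ are those of $\xi$, and since $\max_j\lambda_j^2\le 2/3$ one shows the same covering argument (that the $m$ largest entries of the product distribution have total mass $\le\mu^2$, hence variation distance $\ge 2/9$ from any rearrangement of a distribution with a mass-$\mu$ head satisfying $1/3<\mu\le2/3$) goes through with $p$ replaced by the marginal of the catalyst and the role of the rearrangement played by the unitary-induced permutation on the enlarged basis. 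Everything else is routine.
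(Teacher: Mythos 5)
Your proposal is correct and follows essentially the same route as the paper: Stinespring-dilate the channels to local unitaries with local ancillas, reduce the mixed-output fidelity to a pure-state fidelity against $\psit\otimes\xi$ for some pure ancilla state $\xi$ (your Uhlmann/monotonicity step plays the role of the small proposition the paper proves by hand), and then rerun the Schmidt-spectrum rearrangement argument of Lemma~\ref{thm:half}, noting that the ancilla's Schmidt coefficients are at most $1$ and hence cannot increase the mass of the top $m$ entries. No substantive gap.
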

\begin{proof}
  
The proof is by contradiction. Let $U_A$ and $U_B$ be the Stinespring form of $\mathcal N_A$ and $\mathcal N_B$, and call the registers holding the extra qubits from Stinespring dilation $\mathsf P_1$ and $\mathsf P_2$. Let ${\psi_{\text{initial}}'}  = {\psi_{\text{initial}}} \otimes\ket 0\otimes\ket 0$ be $\psi_{\text{initial}}$ extended to $\mathsf P_1$ and $\mathsf P_2$ with $\ket 0$, such that tracing out $\mathsf P_1$ and $\mathsf P_2$ on $U_A\otimes U_B {\psi_{\text{initial}}'}$ gives us $ \mathcal N_A\otimes\mathcal N_B({\psi_{\text{initial}}}{\psi_{\text{initial}}^*})$. Let ${\psi_{\text{final}}} = U_A\otimes U_B  {\psi_{\text{initial}}'}$. If Eq.~\eqref{eq:pure_channel_nogo} holds, then 
  \begin{align}\label{eq:pure_channel_trace}
  \langle{\psi_{\text{target}}},\operatorname{Tr}_{\mathsf P_1, \mathsf P_2}({\psi_{\text{final}}}{\psi_{\text{final}}^*}){\psi_{\text{target}}\rangle}& > 1-\epsilon
\end{align}
  We show that since the partial trace of ${\psi_{\text{final}}}$ is close to ${\psi_{\text{target}}}$, there exists a state $\phi$ in register $(\mathsf P_1, \mathsf P_2)$ such that $|\langle{\psi_{\text{final}}}, (\phi\otimes{\psi_{\text{target}}})\rangle|^2 > 1-2\epsilon$ with the following proposition.

  \begin{prop}
    Let $\psi\in\HH$ and $\phi\in \HH'\otimes\HH$, where 
    \begin{align}
      \langle\psi,\operatorname{Tr}_{\HH'} (\phi \phi^*)\psi\rangle > 1-\epsilon.
  \end{align}
    Then there exists ${\psi_0}\in\HH'$ such that
    \begin{align}
      |\langle\phi, ({\psi_0}\otimes{\psi})\rangle|^2 > 1-2\epsilon
    \end{align}
  \end{prop}
  \begin{proof}
  Let ${\phi} = \sum_i \sqrt{p_i}i\otimes \ket{v_i}$ be a Schmidt decomposition of $\phi$ across $\HH'$ and $\HH$, where $\ket i\in\HH'$. Then we have 
    \begin{align}
      \sum_i p_i \langle\psi ,v_i\rangle \langle v_i,{\psi}\rangle > 1-\epsilon.
    \end{align}
    Without loss of generality, assume $\ket{v_0}$ is the state closest to ${\psi}$, so that $|\langle{\psi}, v_0\rangle|^2 > 1-\epsilon$. Since $\{\ket{v_i}\}$'s are orthonormal to each other, $\sum_{i\neq 0}|{\langle\psi,} v_{i}\rangle|^2 < \epsilon$. We use this to get a bound on $p_0$:
  \begin{align}
    1-\epsilon &< p_0 |\langle{\phi}, v_0\rangle|^2 +\sum_{i>0} p_i |\langle{\psi}, v_i\rangle |^2\leq  p_0 + (1 - p_0) \epsilon = p_0(1 - \epsilon) + \epsilon,
  \end{align}
    therefore $p_0 > \frac{1-2\epsilon}{1-\epsilon}$. Now let ${\psi_0}= \ket 0$, then
$|\langle{\phi}, ({\psi_0}\otimes{\psi})\rangle|^2 = p_0 \langle{v_0},\psi\rangle  > 1-2\epsilon$.
\end{proof}
 
Returning to the proof of Lemma~\ref{lemma:pure_channel_nogo}, since $U_A$ and $U_B$ are local unitary operations for Alice and Bob, the Schmidt coefficients of ${\psi_{\text{final}}}$ are the same as the Schmidt coefficeint of ${\psi_{\text{initial}}}$, which are $\{\lambda_1,\lambda_2,\dots\}$.
Let $\{\gamma_1,\gamma_2,\dots\}$ be the Schmidt coefficients of $\phi$ across register $\mathsf P_1$ and $\mathsf P_2$.
Then the Schmidt coefficient of $\phi\otimes{\psi_{\text{target}}}$ across Alice and Bob's registers are $\{\lambda_i \lambda_j \gamma_k\}_{i,j,k}$.

Since, for each $k$, $0\le \gamma_k \le 1$, the largest $m$ entries of $\{\lambda_i\lambda_j \gamma_k\}_{i,j,k}$ must be less or equal to the largest $m$ entries of $\{\lambda_i\lambda_j\}_{i,j}$ for any $m > 0$. Following the same the proof of Lemma~\ref{thm:half}, if $\psi$ is $(\sqrt 2-\epsilon_0)$-violating for some constant $\epsilon_0>0$, it is not possible to have the fidelity between ${\psi_{\text{final}}}$ and $\phi\otimes{\psi_{\text{target}}}$ be $1-2\epsilon$ for arbitrary $\epsilon > 0$.
\end{proof}

\section{In C*-circuit model, exact self-embezzlement is achievable}\label{sec:yesgo}

%------------------------------------------------------------------------------%
\subsection{The CAR algebra}

The C*-algebra used is the so-called CAR algebra (where CAR is an abbreviation of ``canonical anti-commutation relations").
(See~\cite{Da,Pedersen1979} for more background information.)

To define the CAR algebra, we can start with the infinite tensor products of Pauli operators of finite weight, where the Pauli operators are 
$I = (\begin{smallmatrix}
1 & 0 \\ 0 & 1
\end{smallmatrix})$, 
$X = (\begin{smallmatrix}
0 & 1 \\ 1 & 0
\end{smallmatrix})$, 
$Z = (\begin{smallmatrix}
1 & \phantom{-}0 \\ 0 & -1
\end{smallmatrix})$, 
$XZ = (\begin{smallmatrix}
0 & -1 \\ 1 & \phantom{-}0
\end{smallmatrix})$
and the weight of such an infinite tensor product is the number of instances of $X$, $Z$, or $XZ$.
For example, $I \otimes X \otimes (XZ) \otimes I \otimes Z \otimes I \otimes I \otimes \cdots$ has weight 3.
We can denote each such operator as $X^aZ^b$, where $a, b \in \{0,1\}^*$, where it is understood that each string is padded on the right with an infinite sequence of $0$s.
Thus, $X^a Z^b = (X^{a_1}\otimes X^{a_2} \otimes \cdots)(Z^{b_1}\otimes Z^{b_2} \otimes \cdots)$.
The above example is $X^a Z^b$, where $a = 011 \equiv 011000\dots$ and $b = 00101 \equiv 00101000\dots$.
Define the set of generators $G = \{X^a Z^b : a, b \in \{0,1\}^*\}$ and $\mathbb{C}G$ to be the set of all (finite) linear combinations%
\footnote{This is well-defined because there are finitely many terms, each of which has all but finitely many factors of $I$.}
of elements of $G$.
$\mathbb{C}G$ is closed under multiplication and is a $*$-algebra.
(Note that $\{\pm X^a Z^b : a, b \in \{0,1\}^*\} \subset \mathbb{C}G$ is a multiplicative group that we can think of as an infinite version of the Pauli group; however, $G$ itself is not closed under multiplication.)

For each element $A \in \mathbb{C}G$, there is an $m \in \mathbb{N}$ and $M \in \mathbb{C}^{2^m \times 2^m}$ such that $A = M \otimes I \otimes I \otimes \cdots$.
Define a norm on $\mathbb{C}G$ as $\|A\| = \|M\|$ (i.e., the spectral norm of $M$ as an operator on $\mathbb{C}^{2^m}$).
The CAR algebra is the completion of $\mathbb{C}G$ with respect to this norm.

\subsubsection{Notation for the CAR algebra and some of its basic properties}

Henceforth we denote the CAR algebra by $\mathcal{R}$.

Also, since $\mathcal{R} \otimes_{\min} \mathcal{R} = \mathcal{R} \otimes_{\max} \mathcal{R}$ (a consequence of $\mathcal{R}$ being hyperfinite~\cite{Da}), we can unambiguously refer to the C*-algebraic tensor product as $\mathcal{R}\otimes\mathcal{R}$.

Note that, in the aforementioned description of the elements of the generating set $G$ as $X^a Z^b$, we have used $\mathbb{N}$ as the index set for the bits of $a = a_1 a_2 \dots$ and $b = b_1 b_2 \dots$; however, any countably infinite set may be used.
It is sometimes convenient to use $\mathbb{Z}$ as the index set, which corresponds to thinking of the infinite tensor products of Paulis as \emph{two-way infinite} strings.
(An alternative way of thinking about the equivalence between using $\mathbb{N}$ and $\mathbb{Z}$ as the index sets in the specification of the generators of $\mathcal{R}$ is as a $*$-isomorphism between $\mathcal{R}$ and $\mathcal{R} \otimes \mathcal{R}$, where the index set can be $\{0,1,2,\dots\}$ for the first copy of $\mathcal{R}$ and $\{-1,-2,\dots\}$ for the second copy.)

An example of a \emph{$*$-automorphism} $\alpha : \mathcal{R} \rightarrow \mathcal{R}$ is conjugation by some unitary $u \in \mathcal{R}$ (where unitary means $u u^* = u^* u = I$).
That is, $\alpha_u (a) = u^*a u$.
These are called \emph{inner} automorphisms.
Automorphisms that are not inner are called \emph{outer} automorphisms.
An example of an outer automorphism is the bilateral-shift operation that maps $X^a Z^b$ (where $a, b :\mathbb{Z} \rightarrow \{0,1\}$) to $X^{a'} Z^{b'}$, where $a'_j = a_{j+1}$ and $b'_j = b_{j+1}$.
Note that \emph{any} permutation of the index set corresponds to a $*$-automorphism.

%------------------------------------------------------------------------------%
\subsection{The self-embezzlement scheme}

We first define a state that can be intuitively thought of as a countably infinite tensor product of states of the form $\Psi = \frac{1}{\sqrt 2}\ket{0}\otimes\ket{0} + \frac{1}{\sqrt 2}\ket{1}\otimes\ket{1}$.
It is impossible to express such a state as a vector in the tensor product of two Hilbert spaces (even if the Hilbert spaces are allowed to have uncountably infinite dimension; a proof of this is in~\cite{CleveLP2017}, whose Appendix~A shows that states in the tensor product of two Hilbert spaces have a Schmidt decomposition, with a countably number of Schmidt coefficients).
 However, as was essentially pointed out in~\cite{KeylSW2003}, such a state \emph{can} be defined as an abstract state 
 $s_{\Psi} : \mathcal{R}\otimes\mathcal{R} \rightarrow \mathbb{C}$ such that
\begin{align}
s_{\Psi}((X^aZ^b)\otimes(X^{a'}Z^{b'})) 
= \prod_{j=-1}^{-\infty}\bra{\Psi}(X^{a_j}Z^{b_j})\otimes(X^{a_j'}Z^{b_j'})\ket{\Psi}
= \prod_{j=-1}^{-\infty}\delta_{a_j,a'_j}\delta_{b_j,b'_j},
\end{align}
where $\delta$ is the Kronecker delta function (and, solely for convenience later on, we are using $-\mathbb{N} = \{-1,-2,\dots\}$ as the index set).
In~\cite{KeylSW2003}, such a state is described as an example of the notion of an ``infinitely entangled state" and several of the properties of this state are explained.
By Proposition~\ref{prop:pure}, the abstract state $s_{\Psi}$ is a pure state.

We also define an abstract state 
$s_{00} : \mathcal{R}\otimes\mathcal{R} \rightarrow \mathbb{C}$ 
that corresponds to an infinite tensor product of $\ket{00} = \ket{0}\otimes\ket{0}$ states as
\begin{align}
s_{00}((X^aZ^b)\otimes(X^{a'}Z^{b'})) 
= \prod_{j=0}^{\infty}\bra{00}(X^{a_j}Z^{b_j})\otimes(X^{a_j'}Z^{b_j'})\ket{00}
= \prod_{j=0}^{\infty}(1-a_j)(1-a'_j).
\end{align}

We set the catalyst state, the initial state of $(\mathsf{A}_1,\mathsf{B}_1)$, to be the combination of $s_{00}$ and $s_{\Psi}$, expressed as $\psi : \mathcal{R}\otimes\mathcal{R} \rightarrow \mathbb{C}$, where 
\begin{align}
\psi((X^aZ^b)\otimes(X^{a'}Z^{b'})) 
= \prod_{j=0}^{\infty}(1-a_j)(1-a'_j)
\prod_{j=-1}^{-\infty}\delta_{a_j,a'_j}\delta_{b_j,b'_j}.
\end{align}
A schematic picture of $\psi$ is illustrated in Figure~\ref{fig:s_psi}.

We set the initial state of $\mathsf{A_2}$ and of $\mathsf{B_2}$ to each be 
$\phi : \mathcal{R} \rightarrow \mathbb{C}$, defined as 
\begin{align}
\phi(X^aZ^b) 
= \prod_{j=-\infty}^{\infty}\bra{0}(X^{a_j}Z^{b_j})\ket{0}
= \prod_{j=-\infty}^{\infty}(1-a_j).
\end{align}
Thus, the initial state of $(\mathsf{A}_2,\mathsf{B}_2)$ is the product state $(\phi\otimes\phi) : \mathcal{R}\otimes\mathcal{R} \rightarrow \mathbb{C}$, where
\begin{align}
(\phi\otimes\phi)((X^aZ^b)\otimes(X^{a'}Z^{b'})) 
= \prod_{j=-\infty}^{\infty}\bra{00}(X^{a_j}Z^{b_j}\otimes X^{a'_j}Z^{b'_j})\ket{00}
= \prod_{j=-\infty}^{\infty}(1-a_j)(1-a'_j).
\end{align}
A schematic picture of $\phi\otimes\phi$ is illustrated in Figure~\ref{fig:s_00}.

\begin{figure}[h!]
\centering
\begin{minipage}{0.4\textwidth}
\centering
\includegraphics[width=0.7\textwidth]{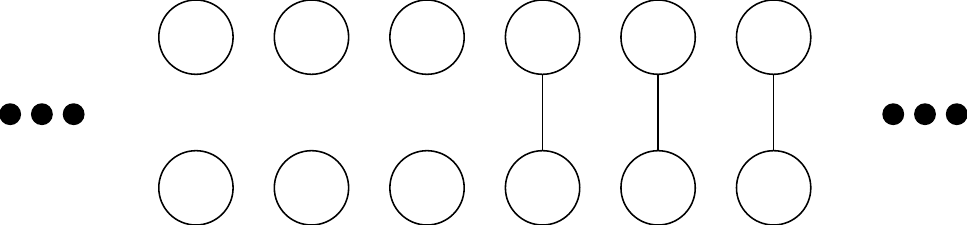}
\vspace{-0.5em}
\caption{schematic of $\psi$, the initial state of $(\mathsf{A}_1,\mathsf{B}_1)$, with infinitely many $\kk{00}$ states on the left and infinitely many $\Psi$ states on the right.}
\label{fig:s_psi}
\end{minipage}
\hspace{0.05\textwidth}
\begin{minipage}{0.4\textwidth}
\centering
\includegraphics[width=0.7\textwidth]{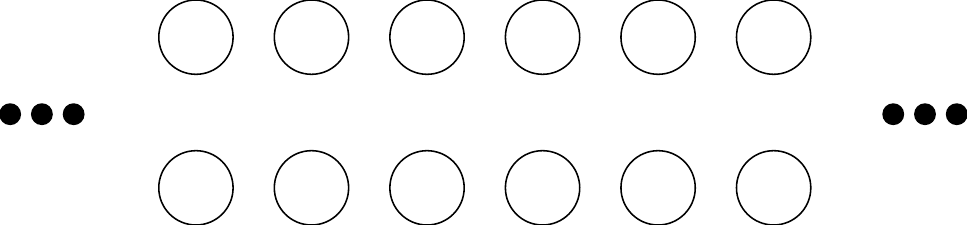}
\vspace{-0.5em}
\caption{schematic of $\phi \otimes \phi$, the initial state of $(\mathsf{A}_2,\mathsf{B}_2)$, with infinitely many $\kk{00}$ states on the left and the right.
This is a product state over $(\mathsf{A}_2,\mathsf{B}_2)$.}
\label{fig:s_00}
\end{minipage}
\end{figure}

The self-embezzlement scheme is based on a $*$-automorphism $\alpha : \mathcal{R} \otimes \mathcal{R} \rightarrow \mathcal{R} \otimes \mathcal{R}$, where 
the same $\alpha$ is applied to $(\mathsf{A}_1,\mathsf{A}_2)$ as well as to $(\mathsf{B}_1,\mathsf{B}_2)$.

Intuitively, $\alpha$ moves infinitely many of the entangled pairs from $(\mathsf{A}_1,\mathsf{B}_1)$ to $(\mathsf{A}_2,\mathsf{B}_2)$ while preserving the entangled pairs in $(\mathsf{A}_1,\mathsf{B}_1)$.
The $\ket{00}$ states are present so that this can be accomplished by a permutation of the qubits.

Formally, $\alpha$ permutes the generators of $\mathcal{R}\otimes\mathcal{R}$, 
which are each of the form $X^a Z^b \otimes X^c Z^d$.
We define $\alpha$ as a reordering of the bits of $(a,c)$ and of the bits of $(b,d)$.
Such a reordering is a permutation on the index set.
The index set of $(a,c)$ is two copies of $\mathbb{Z}$, which can be denoted as $\mathbb{Z}\times\{1,2\}$.
Similarly, for the index set of $(b,d)$.
By a cardinality argument, there exists a bijection $p : \mathbb{Z}\times\{1,2\} \rightarrow \mathbb{Z}\times\{1,2\}$ such that 
\begin{align}
p(\{-1,-2,\dots\}\times\{1\}) = p(\{-1,-2,\dots\}\times\{1,2\}).
\end{align}
We define $\alpha(X^a Z^b \otimes X^c Z^d) = X^{a'} Z^{b'} \otimes X^{c'} Z^{d'}$ where $(a',c')$ is the permutation of the bits of $(a,c)$
corresponding to $p$, and $(b',d')$ is also the permutation of the bits of $(b,d)$ corresponding to $p$.
(This can be expressed as $(a',c')_{\ell} = (a,c)_{p(\ell)}$ and $(b',d')_{\ell} = (b,d)_{p(\ell)}$, for all $\ell \in \mathbb{Z}\times\{1,2\}$.)

Since any permutation on the indices is a $*$-automorphism, $\alpha$ is a $*$-automorphism on $\mathcal{R} \otimes \mathcal{R}$.
Also, $\alpha$ has been designed so that applying this $\alpha$ to $(\mathsf{A}_1,\mathsf{A}_2)$ and to $(\mathsf{B}_1,\mathsf{B}_2)$ transforms $\psi \otimes (\phi \otimes \phi)$ to $\psi \otimes \psi$.

\section{Exact self-embezzlement in the commuting-operator model}

The purpose of this section is to make the statement of Theorem~1.2 more precise.

We begin by sketching a definition of a multi-register commuting-operator framework.
There is one Hilbert space $\mathcal{H}$ and, for registers $\mathsf{R}_1, \dots, \mathsf{R}_m$, there are corresponding algebras of observables, which are C*-algebras $\mathcal{A}_1,\dots,\mathcal{A}_m \subseteq B(\mathcal{H})$, with the requirement that, for each $j \neq k$, $\mathcal{A}_j$ and $\mathcal{A}_k$ commute.
Intuitively, if Alice has register $\mathsf{R}_j$ in her lab then she can perform any POVM measurement with elements in $\mathcal{A}_j$.
Moreover, for a compound register of the form $(\mathsf{R}_{k_1},\dots,\mathsf{R}_{k_{\ell}})$, the associated algebra of observables is defined as $\mathcal{A}_{(k_1,\dots,k_{\ell})} = \overline{\mathcal{A}_{k_1} \cup \cdots \cup \mathcal{A}_{k_{\ell}}}$ (where the bar denotes the C*-algebraic closure).
If the global state of the system is $\psi \in \mathcal{H}$ then the state of  register $(\mathsf{R}_{k_1},\dots,\mathsf{R}_{k_{\ell}})$ is $s : \mathcal{A}_{(k_1,\dots,k_{\ell})} \rightarrow \mathbb{C}$ defined as $s(A) = \langle \psi, A \psi \rangle$ (for all $A \in \mathcal{A}_{(k_1,\dots,k_{\ell})}$).

We can incorporate the reversible dynamics of a register in this model.
Define a unitary $U \in B(\mathcal{H})$ to \emph{act on register $(\mathsf{R}_{k_1},\dots,\mathsf{R}_{k_{\ell}})$} if:
\begin{itemize}
\item
$U$ is \emph{localized} to register $(\mathsf{R}_{k_1},\dots,\mathsf{R}_{k_{\ell}})$.
Technically this is stated as, for all $j \not\in \{k_1,\dots,k_{\ell}\}$ and $A \in \mathcal{A}_j$, $U^*AU = A$.
(In other words, for all for all $j \not\in \{k_1,\dots,k_{\ell}\}$, $U$ centralizes $\mathcal{A}_j$.)

\item
$U$ \emph{preserves the C*-algebra of $(\mathsf{R}_{k_1},\dots,\mathsf{R}_{k_{\ell}})$}.
Since applying $U$ to a state is equivalent to conjugating the POVM elements of a subsequent measurement by $U$, this requirement is technically expressed as 
$U^* \mathcal{A}_{(k_1,\dots,k_{\ell})} U = \mathcal{A}_{(k_1,\dots,k_{\ell})}$.
(In other words, $U$ normalizes $\mathcal{A}_{(k_1,\dots,k_{\ell})}$.)
\end{itemize}

Furthermore, in a scenario where there are multiple unitaries acting on distinct registers, we impose an additional requirement that they commute with each other.
Thus, if $U$ acts on register $(\mathsf{R}_{k_1},\dots,\mathsf{R}_{k_{\ell}})$ and $V$ acts on register $(\mathsf{R}_{j_1},\dots,\mathsf{R}_{j_{m}})$, where 
$\{k_1,\dots,k_{\ell}\} \cap \{j_1,\dots,j_{m}\} = \emptyset$ then $U$ and $V$ commute.

\subsection{Definition of self-embezzlement in the commuting operator model}\label{sec:c-o-def}

There are four basic registers $\mathsf{A}_1,\mathsf{B}_1,\mathsf{A}_2,\mathsf{B}_2$ with respective C*-algebras $\mathcal{A}_1,\mathcal{B}_1,\mathcal{A}_2,\mathcal{B}_2 \subset B(\mathcal{H})$, where $\mathcal{H}$ is the underlying Hilbert space.
There is an initial state $\psi \in \mathcal{H}$ and unitary operations $U_A$ and $U_B$, in the reversible dynamics of registers of registers $(\mathsf{A}_1,\mathsf{A}_2)$ and $(\mathsf{B}_1,\mathsf{B}_2)$, respectively.
The final state is $\psi' = U_A U_B \psi$.

The following properties are relevant to being a self-embezzlement scheme:
\begin{enumerate}
\item
The initial state $\psi$ is a product state over the three registers $(\mathsf{A}_1,\mathsf{B}_1)$, 
$\mathsf{A}_2$ and $\mathsf{B}_2$.
Technically, this can be expressed as, for all $X \in \overline{\mathcal{A}_1 \cup \mathcal{B}_1}$, $Y \in \mathcal{A}_2$, and $Z \in \mathcal{B}_2$,
\begin{align}
\langle\psi, XYZ \psi\rangle =  \langle\psi, X \psi \rangle
\langle\psi, Y \psi\rangle
\langle\psi, Z \psi\rangle.
\end{align}

\item
Register $(\mathsf{A}_1,\mathsf{B}_1)$  incurs no net change when $U_AU_B$ is applied.
Technically, for all $X \in \overline{\mathcal{A}_1 \cup \mathcal{B}_1}$,
\begin{align}
\langle\psi, X \psi\rangle =  \langle\psi', X \psi'\rangle.
\end{align}

\item
For the final state $\psi'$, the bipartite state of register $(\mathsf{A}_1,\mathsf{B}_1)$ is the same as the bipartite state of register  $(\mathsf{A}_2,\mathsf{B}_2)$.
Technically, there exist unitary operations $W_A, W_B \in B(\mathcal{H})$ acting on registers $(\mathsf{A}_1,\mathsf{A}_2), (\mathsf{B}_1,\mathsf{B}_2)$ (respectively) that map between the two registers.
That is, $W_A^*\mathcal{A}_1 W_A = \mathcal{A}_2$ and $W_B^*\mathcal{B}_1 W_B = \mathcal{B}_2$.
And, for all $X \in \mathcal{A}_1$ and $Y \in \mathcal{B}_1$
\begin{align}
\langle\psi', X Y \psi'\rangle =  \langle\psi', (W_A^* X W_A)(W_B^* Y W_B)\psi' \rangle.
\end{align}
Thus, for $\psi'$, measurements of $(\mathsf{A}_1,\mathsf{B}_1)$ are equivalent to measurements of $(\mathsf{A}_2,\mathsf{B}_2)$, under the unitary transformation $W_AW_B$.
\end{enumerate}

\subsection{Converting from C*-circuit model to commuting-operator model}

We begin with an exact embezzlement protocol in the C*-circuit model $(\mathcal{A},\mathcal{B},\psi,\phi_A,\phi_B,\alpha_A,\alpha_B)$ from section~\ref{sec:yesgo}.
The C*-algebra of the entire system $(\mathsf{A}_1,\mathsf{B}_1,\mathsf{A}_2,\mathsf{B}_2)$ is
$\mathcal C = \cl A \otimes_{\min} \cl B \otimes_{\min} \cl A \otimes_{\min} \cl B$.
Define the initial state $s : \mathcal{C} \rightarrow \mathbb{C}$ as 
\begin{align}
s(x\otimes y \otimes z) = \psi(x)\phi_A(y)\phi_B(z),
\end{align}
for $x \in \mathcal{A}_1\otimes\mathcal{B}_1$, $y \in \mathcal{A}_2$, $z \in \mathcal{B}_2$.

The $*$-isomorphisms $\alpha_A$ and $\alpha_B$ are outer automorphisms of $\cl C$ of infinite order that commute with each other.
Using the $*$-crossed construction~\cite{Da}, we can extend $\cl C$ to $\mathcal{C} \rtimes ( \mathbb Z \times \mathbb Z)$, where the $*$-crossed product is with respect to the group action generated by $\alpha_A$ and $\alpha_B$.
We can regard the crossed-product 
$\cl C \rtimes ( \mathbb Z \times \mathbb Z)$ as the C*-algebra generated by $\cl C$ together with two unitaries, $u_A, u_B$ corresponding to the group elements $(1,0)$ and $(0,1)$, so that
\begin{align}
u_A^*(a_1 \otimes b_1 \otimes a_2 \otimes b_2) u_A 
&= a'_1 \otimes b_1 \otimes a'_2 \otimes b_2, 
\ \ \mbox{where \ $a'_1 \otimes a'_2 = \alpha_A(a_1 \otimes a_2)$},
\end{align}
and
\begin{align}
u_B^*(a_1 \otimes b_1 \otimes a_2 \otimes b_2)u_B 
&= a_1 \otimes b'_1 \otimes a_2 \otimes b'_2,
\ \ \mbox{where \ $b'_1 \otimes b'_2 = \alpha_B(b_1 \otimes b_2)$}.
\end{align}
From these equations one sees that $u_A$ commutes with all elements of the form $I \otimes b_1 \otimes I \otimes b_2$ and, similarly, $u_B$ commutes with all elements of the form $a_1 \otimes I \otimes a_2 \otimes I$.

The state $s$ extends to
$\cl C \rtimes (\mathbb Z \times \mathbb Z)$ by setting $s$ to 0 on all terms that contain a non-zero power of either $u_A$ or $u_B$. 
To see that this is a well-defined state, one uses the usual representation of the reduced crossed-product.

By applying the GNS Representation Theorem~\cite{GelfandN1943,Segal1947} to the state $s$, we obtain a Hilbert space $\mathcal{H}$, a unit vector $\eta \in \mathcal H$, and a unital $*$-homomorphism $\pi : \mathcal{C} \rtimes ( \mathbb Z \times \mathbb Z) \rightarrow B(\mathcal H)$ such that 
$s(c) = \langle \eta, \pi(c) \eta \rangle$ for all $c \in \mathcal{C}$.

Now we define 
\begin{align}
\cl A_1 &= \pi(\cl A \otimes I \otimes I \otimes I) \\
\cl A_2 &= \pi(I \otimes I \otimes \cl A \otimes I) \\
\cl B_1 &= \pi(I \otimes \cl B \otimes I \otimes I) \\
\cl B_2 &= \pi(I \otimes I \otimes I \otimes \cl B),
\end{align}
the initial state $\psi = \eta$, and 
\begin{align}
U_A &= \pi(u_A) \\
U_B &= \pi(u_B).
\end{align}
It is straightforward to check that conditions 1, 2, 3 in section~\ref{sec:c-o-def} hold for $\cl H$, $\mathsf{A}_1,\mathsf{B}_1,\mathsf{A}_2,\mathsf{B}_2$, $\psi$, $U_A$, and $U_B$.
To establish condition 3, we can set $W_A \bigl(u_A^{k} u_B^{\ell} (a_1 \otimes b_1 \otimes a_2 \otimes b_2)\big) = u_A^{k} u_B^{\ell} (a_2 \otimes b_1 \otimes a_1 \otimes b_2)$ and $W_B \bigl(u_A^{k} u_B^{\ell} (a_1 \otimes b_1 \otimes a_2 \otimes b_2)\big) = u_A^{k} u_B^{\ell} (a_1 \otimes b_2 \otimes a_2 \otimes b_1)$.

\section{Acknowledgments}

A substantial part of this paper was completed at l'Institut Henri Poincar\'{e} during their trimester ``Analysis in Quantum Information Theory" in the fall of 2017.
All authors met and collaborated there, and acknowledge its fruitful working environment.
We thank Debbie Leung, Miguel Navascu\'{e}s 
(who informed us of Ref.~\cite{NavascuesP2012}), William Slofstra, and Thomas Vidick (who informed us of Ref.~\cite{Kaniewski2016}) for helpful comments.
RC, VP, and LL acknowledge support by Canada's NSERC.
BC acknowledges support by Kakenhi 15KK0162, 17H04823, 17K18734 and ANR-14CE25-0003-01.

\appendix

\section{Some basics of C*-algebras}\label{sec:c-star-basics}

In this appendix, we supply a few basics of the theory of C*-algebras.

Given a Hilbert space $\cl H$, by a {\it concrete C*-subalgebra} $\cl A$, we mean any subset of the bounded linear operators on $\cl H, B(\cl H)$ that is a subalgebra, is a closed subset in the operator norm and has the property that if $A \in \cl A$, then its adjoint $A^*$(sometimes denoted $A^{\dag}$) is also in $\cl A$.  Such algebras also have an abstract characterisation.

\begin{defn} Let $\cl A$ be an algebra over the complex numbers, with a norm $\| \cdot \|$ and a map, $a \to a^*$, satisfying $(a+b)^* = a^* + b^*, \, (\lambda a)^* = \overline{\lambda} a^*, \, \forall \lambda \in \mathbb C$ and $(ab)^*= b^*a^*$. Then $\cl A$ is called a {\bf C*-algebra} provided:
\begin{itemize}
\item $\cl A$ is complete in the norm $\|\cdot\|$,
\item $\|ab\| \le \|a\| \|b\|$,
\item $\|a^*a\|= \|a\|^2.$
\end{itemize}
\end{defn}

A C*-algebra is {\bf unital} if it contains an identity element with respect to multiplication.
Given two C*-algebras, $\cl A$ and $\cl B$ a linear map $\pi: \cl A \to \cl B$ is called a {\bf $*$-homomorphism} provided that it is a homomorphism, i.e., $\pi(ab) = \pi(a) \pi(b)$, and $\pi(a^*) = \pi(a)^*.$ A $*$-homomorphism that is one-to-one and onto is called a {\bf $*$-isomorphism}, and a $*$-isomorphism from $\cl A$ to $\cl A$ is called a {\bf $*$-automorphism}.
A $*$-homomorphism is automatically contractive, and hence, a $*$-isomorphism is isometric.

By a {\bf state} on a unital C*-algebra $\cl A$, we mean any complex linear functional $s: \cl A \to \mathbb C$ that satisfies $s(I) =1$ and $s(a^*a) \ge 0$ for all $a \in \cl A$.
(Intuitively, any such $s$ defines the outcome probabilities of all possible POVM measurements: for a measurement element of the form $b$, where $0 \le b \le I$, $s(b)$ is the probability of outcome $b$.)

The celebrated {\bf Gelfand-Naimark-Segal theorem}~\cite{GelfandN1943,Segal1947} tells us that every abstract C*-algebra $\cl A$ is $*$-isomorphic to a concrete C*-subalgebra of $B(\cl H)$ for some Hilbert space. The key element of this theorem is a result about states.  
Given a unital C*-algebra $\cl A$ and a state, $s: \cl A \to \mathbb C$, its {\bf GNS representation} consists of a Hilbert space $\cl H_s$ and a $*$-homomorphism,
 \[ \pi_s: \cl A \to B(\cl H_s) \text{ and } \eta_s \in \cl H_s \text{ such that } s(a) = \langle \eta, \pi_s(a)  \eta_s\rangle.\] 
 Conversely, given a $\pi: \cl A \to B(\cl H)$ and unit vector $\eta$ we obtain a state by setting
  $s(a) = \langle \eta, \pi(a) \eta \rangle$.  We call this the {\it induced state}. The following identifies when a triple $(\pi, \cl H, \eta)$ is really the same as the GNS representation.
  
  Given a  representation $\pi:A \to B(\cl H)$ a unit vector $\eta \in \cl H$ is called \emph{cyclic} if the set of vectors  $\{ \pi(a) \eta: a \in \cl A \}$ is dense in $\cl H$.
  
  \begin{prop} Let $\pi: \cl A \to B(\cl H)$, let $\eta \in \cl H$ be a unit vector, let $s(a) = \langle \pi(a) \eta, \eta \rangle$ be the induced state and let $\pi_s: \cl A \to B(\cl H_s)$ and $\eta_s$ be the GNS representation.  If $\eta$ is cyclic, then there is a unitary $U: \cl H_s \to \cl H$ with $U \eta_s = \eta$ such that $U^*\pi(a) U = \pi_s(a)$ for all $a$.
  \end{prop}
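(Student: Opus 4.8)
The plan is to prove this by building the intertwining unitary explicitly; this is the standard uniqueness argument for the GNS triple. Recall the GNS construction for $s$: one puts on $\cl A$ the (possibly degenerate) form $\langle a,b\rangle = s(a^*b)$, sets $N_s = \{a \in \cl A : s(a^*a) = 0\}$ (a left ideal, so the form descends to $\cl A / N_s$), takes $\cl H_s$ to be the completion of $\cl A / N_s$, defines $\pi_s(a)(b + N_s) = ab + N_s$, and puts $\eta_s = I + N_s$; then $s(a) = \langle \pi_s(a)\eta_s, \eta_s \rangle$ and $\eta_s$ is cyclic for $\pi_s$.

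First I would define a linear map $U_0$ on the dense subspace $\cl A / N_s \subseteq \cl H_s$ by $U_0(a + N_s) = \pi(a)\eta$. The key point is that it is norm preserving: $\|U_0(a + N_s)\|^2 = \|\pi(a)\eta\|^2 = \langle \pi(a)^*\pi(a)\eta, \eta\rangle = \langle \pi(a^*a)\eta,\eta\rangle = s(a^*a) = \|a + N_s\|^2_{\cl H_s}$, using that $\pi$ is a $*$-homomorphism and $s(c) = \langle \pi(c)\eta, \eta\rangle$. Applied with $a$ replaced by $a-b$, this shows simultaneously that $U_0$ is well defined on cosets (if $a-b \in N_s$ then $\pi(a)\eta = \pi(b)\eta$) and isometric; by polarization it then also preserves inner products. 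Being a linear isometry on a dense subspace, $U_0$ extends uniquely by continuity to a linear isometry $U : \cl H_s \to \cl H$.

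Next I would check that $U$ is onto. Its range contains $\{\pi(a)\eta : a \in \cl A\}$, which is dense in $\cl H$ by cyclicity of $\eta$; and the range of any Hilbert-space isometry is closed (if $U x_n \to y$, then $(x_n)$ is Cauchy, hence convergent to some $x$, and $y = Ux$). A dense closed subspace is the whole space, so $U$ is unitary. Then $U\eta_s = U(I + N_s) = \pi(I)\eta = \eta$ (here $\pi(I)$ is a projection with $\langle \pi(I)\eta, \eta\rangle = s(I) = 1 = \|\eta\|^2$, which forces $\pi(I)\eta = \eta$, and in fact $\pi(I) = I$ by cyclicity). Finally, for all $a, b \in \cl A$, $U\pi_s(a)(b + N_s) = U(ab + N_s) = \pi(ab)\eta = \pi(a)\pi(b)\eta = \pi(a)U(b + N_s)$, so $U\pi_s(a) = \pi(a)U$ on the dense subspace $\cl A/N_s$ and hence on all of $\cl H_s$; since $U$ is unitary, this rearranges to the asserted identity $U^*\pi(a)U = \pi_s(a)$.

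There is no genuine obstacle here — it is the textbook uniqueness statement for the GNS representation. The only points requiring care are the bookkeeping around the (possibly nonzero) null space $N_s$ and the inner-product convention, and the elementary but essential fact that a Hilbert-space isometry with dense range is automatically unitary; this is exactly where the cyclicity hypothesis on $\eta$ is used, since without it one only obtains a unitary from $\cl H_s$ onto the reducing subspace $\overline{\pi(\cl A)\eta}$.
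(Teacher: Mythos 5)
Your proof is correct: it is the standard GNS uniqueness argument, and every step (well-definedness and isometry of $U_0(a+N_s)=\pi(a)\eta$ via $s(a^*a)=\|\pi(a)\eta\|^2$, surjectivity from cyclicity of $\eta$ plus closedness of an isometry's range, the check that $U\eta_s=\eta$, and the intertwining relation verified on the dense subspace) is sound. The paper itself states this proposition without proof in its appendix of C*-algebra background, deferring to standard references, and your argument is precisely the textbook proof that would be supplied there.
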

    
    Let $S(\cl A)$ denote the set of all states on $\cl A$. This is a convex set and a state is \emph{pure} if and only if it is an extreme point of this set.
    
    \begin{prop} Let $s \in S(\cl A)$. The following are equivalent:
    \begin{enumerate}
    \item $s$ is pure,
    \item if $f: \cl A \to \mathbb C$ is a positive linear functional such that $f(p) \le s(p)$ for all $p \in \cl A^+$, then there is a constant $c \ge 0$ such that $f(a) = c\, s(a)$ for all $a$,
    \item if $\pi_s: \cl A \to B(\cl H_s)$ is the GNS representation, then $\pi_s(\cl H)^{\prime} = \mathbb C \cdot I_{\cl H_s}$.
  \end{enumerate}
  \end{prop}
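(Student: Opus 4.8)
The plan is to prove the cycle $(1)\Rightarrow(2)\Rightarrow(3)\Rightarrow(1)$; the substantive link is between the order condition $(2)$ and the irreducibility condition $(3)$, i.e.\ $\pi_s(\cl A)'=\mathbb{C}\cdot I$, which is mediated by a Radon--Nikodym-type correspondence between positive functionals dominated by $s$ and positive contractions in the commutant. For $(1)\Rightarrow(2)$: given a positive linear functional $f$ with $f(p)\le s(p)$ for all $p\in\cl A^+$, put $c=f(I)$, so $0\le c\le 1$. If $c=0$ then $f\equiv 0$ by the Cauchy--Schwarz inequality for positive functionals, $|f(a)|^2\le f(I)f(a^*a)=0$. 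If $c=1$ then $g:=s-f$ is positive with $g(I)=0$, hence $g\equiv 0$ by the same inequality, so $f=s$. If $0<c<1$, then $s=c\cdot(f/c)+(1-c)\cdot(g/(1-c))$ is a proper convex combination of two states, and purity of $s$ forces $f/c=s$, i.e.\ $f=cs$.

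For $(2)\Rightarrow(3)$: since the commutant is a $*$-algebra spanned by its elements $T$ with $0\le T\le I$, it is enough to show every such $T$ is a scalar. Given such a $T$, set $f(a)=\langle\eta_s,\pi_s(a)T\eta_s\rangle$; because $T$ commutes with $\pi_s(\cl A)$ and $0\le T\le I$, $f$ is a positive functional with $f(a^*a)=\langle\pi_s(a)\eta_s,T\pi_s(a)\eta_s\rangle\le\|\pi_s(a)\eta_s\|^2=s(a^*a)$. By $(2)$, $f=cs$ for some $c\ge0$; rewriting this as $\langle\pi_s(a^*)\eta_s,T\eta_s\rangle=c\langle\pi_s(a^*)\eta_s,\eta_s\rangle$ for all $a$ and using that $\eta_s$ is cyclic gives $T\eta_s=c\eta_s$, whence $T\pi_s(a)\eta_s=\pi_s(a)T\eta_s=c\pi_s(a)\eta_s$ on a dense set, so $T=cI$.

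For $(3)\Rightarrow(1)$: suppose $s=t\,s_1+(1-t)\,s_2$ with $s_1,s_2$ states and $0<t<1$; then $g:=t\,s_1$ is positive and dominated by $s$ on $\cl A^+$. I would construct the Radon--Nikodym operator by hand: the sesquilinear form $[\pi_s(a)\eta_s,\pi_s(b)\eta_s]:=g(b^*a)$ is well defined on the dense subspace $\pi_s(\cl A)\eta_s$ (Cauchy--Schwarz for $g$ together with $g(a^*a)\le s(a^*a)=\|\pi_s(a)\eta_s\|^2$), is positive, and is bounded by $1$, so it extends to all of $\cl H_s$ and is implemented by an operator $T$ with $0\le T\le I$ and $g(a)=\langle\eta_s,\pi_s(a)T\eta_s\rangle$. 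A direct computation (using $g(b^*(ca))=g((c^*b)^*a)$) shows $T\in\pi_s(\cl A)'$. Condition $(3)$ then gives $T=cI$, so $g=cs$; evaluating at $I$ yields $c=t$, hence $s_1=s$ and similarly $s_2=s$, so $s$ is an extreme point of $S(\cl A)$, i.e.\ pure.

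The main obstacle is the Radon--Nikodym construction used in $(3)\Rightarrow(1)$ (and, in the same spirit, the reduction in $(2)\Rightarrow(3)$ to positive contractions in the commutant): one must check well-definedness and boundedness of $[\cdot,\cdot]$ on the \emph{non-closed} subspace $\pi_s(\cl A)\eta_s$ --- this is exactly where positivity of $g$ and its domination by $s$ enter --- and then verify that the operator it produces actually commutes with every $\pi_s(a)$, which is a short computation relying on the associativity identity above and on cyclicity of $\eta_s$. The remaining steps are routine manipulations with the GNS data and the Cauchy--Schwarz inequality for positive linear functionals.
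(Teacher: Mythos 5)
Your argument is correct, and there is nothing in the paper to compare it against: the proposition is stated in the appendix as standard background (``some books use one of these other properties as the definition of pure''), with the reader implicitly referred to texts such as \cite{KR}, so no proof is given there. What you supply is precisely the classical textbook argument, and all the delicate points are handled properly: in $(1)\Rightarrow(2)$ the edge cases $c=0$ and $c=1$ via Cauchy--Schwarz for positive functionals; in $(2)\Rightarrow(3)$ the reduction to positive contractions $T$ in the commutant, the verification that $a\mapsto\langle\eta_s,\pi_s(a)T\eta_s\rangle$ is positive and dominated by $s$ (every $p\in\cl A^+$ is of the form $a^*a$, so domination on such elements suffices), and the cyclicity argument giving $T\eta_s=c\eta_s$ and hence $T=cI$; and in $(3)\Rightarrow(1)$ the hand-built Radon--Nikodym operator, where well-definedness and the bound $0\le T\le I$ follow exactly as you say from Cauchy--Schwarz for $g$ together with $g\le s$ on positives, and commutation of $T$ with $\pi_s(\cl A)$ from the identity $g(b^*(ca))=g((c^*b)^*a)$ plus density of $\pi_s(\cl A)\eta_s$. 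The only cosmetic remarks: in $(1)\Rightarrow(2)$ you use $g=s-f$ in the case $0<c<1$ before naming it, and in $(2)\Rightarrow(3)$ it is worth saying explicitly that $c$ is real so that no complex conjugate intervenes when you pass from $\langle\xi,T\eta_s\rangle=c\langle\xi,\eta_s\rangle$ on a dense set to $T\eta_s=c\eta_s$; neither affects correctness.
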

  
  Because these three statements are the same, some books use one of these other properties as the definition of pure.  Combining the two results we see that:
  
  \begin{prop} Let $\pi: \cl A \to B(\cl H)$ be a representation and let $\eta \in \cl H$ be a cyclic vector. Then the induced state is pure if and only if $\pi(\cl A)^{\prime} = \mathbb C \cdot I_{\cl H}.$
  \end{prop}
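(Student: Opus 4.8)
The plan is to deduce the Proposition directly from the two preceding propositions, using the elementary fact that unitary conjugation is a $*$-isomorphism of the ambient operator algebras that carries commutants to commutants.

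First I would fix notation: let $s(a) = \langle \eta, \pi(a)\eta\rangle$ be the induced state, and let $(\pi_s, \cl{H}_s, \eta_s)$ be its GNS representation. Since $\eta$ is cyclic for $\pi$, the proposition identifying cyclic triples with the GNS construction supplies a unitary $U : \cl{H}_s \to \cl{H}$ with $U\eta_s = \eta$ and $U^*\pi(a)U = \pi_s(a)$ for every $a \in \cl{A}$.

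Next I would transport commutants along $U$. The map $T \mapsto U^*TU$ is a $*$-isomorphism of $B(\cl{H})$ onto $B(\cl{H}_s)$ sending $I_{\cl{H}}$ to $I_{\cl{H}_s}$, hence it carries $\CCC \cdot I_{\cl{H}}$ onto $\CCC \cdot I_{\cl{H}_s}$. Moreover, for $T \in B(\cl{H})$ we have $T \in \pi(\cl{A})'$ iff $T\pi(a) = \pi(a)T$ for all $a$, iff $(U^*TU)\pi_s(a) = \pi_s(a)(U^*TU)$ for all $a$ (obtained by conjugating the previous identity by $U$ and using $U^*\pi(a)U = \pi_s(a)$), iff $U^*TU \in \pi_s(\cl{A})'$. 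Therefore $U^*\,\pi(\cl{A})'\,U = \pi_s(\cl{A})'$, and consequently $\pi(\cl{A})' = \CCC \cdot I_{\cl{H}}$ if and only if $\pi_s(\cl{A})' = \CCC \cdot I_{\cl{H}_s}$.

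Finally I would invoke the preceding proposition characterizing purity via the GNS representation: $s$ is pure if and only if $\pi_s(\cl{A})' = \CCC \cdot I_{\cl{H}_s}$. Chaining this with the equivalence just established yields that $s$ is pure if and only if $\pi(\cl{A})' = \CCC \cdot I_{\cl{H}}$, which is the claim. I do not expect a genuine obstacle here; the only step that warrants a moment's care is the verification that unitary conjugation intertwines the two commutants, which is precisely the short computation above. (One could instead argue from scratch, translating a decomposition of $s$ into positive functionals into a positive operator in $\pi(\cl{A})'$ via a Radon--Nikodym-type argument, but reusing the two stated propositions is cleaner.)
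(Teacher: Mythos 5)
Your proposal is correct and follows exactly the route the paper intends: the paper states this proposition with the phrase ``Combining the two results we see that,'' i.e.\ it is obtained by chaining the GNS-uniqueness proposition with the purity criterion $\pi_s(\cl A)' = \mathbb{C}\cdot I_{\cl H_s}$, and your commutant-transport computation $U^*\,\pi(\cl A)'\,U = \pi_s(\cl A)'$ is precisely the routine verification being left implicit.
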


\subsection{Tensor Products of C*-algebras}
In this paper we needed some properties of tensor products of C*-algebras, especially in defining the CAR algebra and its properties.  Here we provide a very brief explanation of some of these facts/ideas.

Let $\cl A$ and $\cl B$ be two unital C*-algebras, and let $\cl A\otimes \cl B$ be their algebraic tensor product. For $x= \sum_i a_i \otimes b_i$ and $y = \sum_j c_j \otimes d_j$ in $\cl A \otimes \cl B$ we set
\[ xy = \sum_{i,j} a_i c_j \otimes b_i d_j,\]
which defines a product, and we define {\it a *-map} by
\[x^* = \sum_i a_i^* \otimes b_i^*.\]
These two operations make $\cl A \otimes \cl B$ into a {\it $*$-algebra}.

Note that the *-subalgebra $\{ a \otimes 1: a \in \cl A \}$ can be identified with $\cl A$ and similarly, $\{ 1 \otimes b: b \in \cl B \}$ can be identified with $\cl B$. Also $(a \otimes 1)(1 \otimes b) = a \otimes b = (1 \otimes b)(1 \otimes a)$ so that these ``copies" of $\cl A$ and $\cl B$ commute.

There are two important ways to give this $*$-algebra a norm. Once we have a norm, it can be completed to become a C*-algebra. 

Given $x \in \cl A \otimes \cl B$ we set
\[\|x\|_{\max} = \sup \bigl\{ \| \pi(x) \|: \ \pi: \cl A \otimes \cl B \to B(\cl H) \text{ is a unital $*$-homomorphism}\bigr\}, \]
where the supremum is taken over all Hilbert spaces $\cl H$ and all unital $*$-homomorphisms, i.e., homomorphisms such that $\pi(x)^* = \pi(x^*)$.
The completion of $\cl A \otimes \cl B$ in this norm is a C*-algebra denoted $\cl A \otimes_{\max} \cl B$.

Alternatively, if $\pi_1: \cl A \to B(\cl H_1)$ and $\pi_2: \cl B \to B(\cl H_2)$ are unital $*$-homomorphisms, then setting $\pi(a \otimes b) = \pi_1(a) \otimes \pi_2(b) \in B(\cl H_1 \otimes \cl H_2)$ and extending linearly, defines a unital $*$-homomorphism from $\cl A \otimes \cl B$ into $B(\cl H_1 \otimes \cl H_2)$ denoted by $\pi= \pi_1 \otimes \pi_2$.

Given $x \in \cl A \otimes \cl B$ we set
\begin{multline*} \|x\|_{\min} = \sup \bigl\{ \| \pi_1 \otimes \pi_2(x) \|: \,\, \pi_1: \cl A \to B(\cl H_1), \, \, \pi_2: \cl B \to B(\cl H_2) \\ \text{ are unital $*$-homomorphisms}\bigr\}.\end{multline*}
The completion of $\cl A \otimes \cl B$ in this norm is a C*-algebra denoted $\cl A \otimes_{\min} \cl B$.

Some C*-algebras $\cl A$ have the property that for every C*-algebra $\cl B$ the min norm and the max norm on $\cl A \otimes \cl B$ are equal. Such algebras are called {\it nuclear}. For example, every matrix algebra is nuclear.

Once we have seen these definitions for pairs of algebras it is easy to see how to extend it to define a min and a max norm on the tensor product of any finite collection of C*-algebras. To extend these definitions to tensor products of infinitely many C*-algebras, one first defines the infinite algebraic tensor product of unital algebras to consist of elements that are formal infinite tensor products that are equal to the identity in all but finitely many components. This is the construction used to build the CAR algebra, which is known to be a nuclear C*-algebra.
 
 A good source for the above material is \cite{KR}.

  Here is an application of these ideas.
  
  \begin{thm} If $s_1: \cl A_1 \to \mathbb C$ and $s_2: \cl A_2 \to \mathbb C$ be pure states then the state $s_1 \otimes s_2: \cl A_1 \otimes_{\min} \cl A_2 \to \mathbb C$ defined as
\begin{align} 
s_1 \otimes s_2(a \otimes b) = s_1(a) s_2(b)
\end{align}
is also pure.
  \end{thm}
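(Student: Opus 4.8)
The plan is to deduce purity of $s_1 \otimes s_2$ from the irreducibility criterion recorded above: for a representation $\pi : \mathcal{A} \to B(\mathcal{H})$ with a cyclic unit vector $\eta$, the induced state is pure if and only if $\pi(\mathcal{A})' = \mathbb{C} \cdot I_{\mathcal{H}}$. So the whole argument is: build the right representation of $\mathcal{A}_1 \otimes_{\min} \mathcal{A}_2$ and compute a commutant.

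First I would take the GNS representations $(\pi_1,\mathcal{H}_1,\eta_1)$ of $s_1$ and $(\pi_2,\mathcal{H}_2,\eta_2)$ of $s_2$ and form the tensor-product $*$-homomorphism $\pi_1 \otimes \pi_2$ on the algebraic tensor product; by the very definition of $\|\cdot\|_{\min}$ one has $\|(\pi_1\otimes\pi_2)(x)\| \le \|x\|_{\min}$, so it extends to a unital $*$-homomorphism $\pi : \mathcal{A}_1 \otimes_{\min} \mathcal{A}_2 \to B(\mathcal{H}_1 \otimes \mathcal{H}_2)$. Set $\eta = \eta_1 \otimes \eta_2$. Then the induced state is exactly $s_1 \otimes s_2$, since $\langle \eta, \pi(a\otimes b)\eta\rangle = \langle \eta_1, \pi_1(a)\eta_1\rangle\langle\eta_2,\pi_2(b)\eta_2\rangle = s_1(a)s_2(b)$. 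Also $\eta$ is cyclic: $\{\pi_1(a)\eta_1 : a \in \mathcal{A}_1\}$ is dense in $\mathcal{H}_1$ and $\{\pi_2(b)\eta_2 : b \in \mathcal{A}_2\}$ is dense in $\mathcal{H}_2$, so the closed linear span of the product vectors $\pi_1(a)\eta_1 \otimes \pi_2(b)\eta_2 = \pi(a\otimes b)\eta$ is all of $\mathcal{H}_1 \otimes \mathcal{H}_2$. Hence it suffices to prove $\pi(\mathcal{A}_1 \otimes_{\min}\mathcal{A}_2)' = \mathbb{C}\cdot I$.

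Next I would reduce the commutant computation. Since the commutant of a $*$-closed set of operators equals the commutant of the $*$-algebra (indeed the von Neumann algebra) it generates, $\pi(\mathcal{A}_1\otimes_{\min}\mathcal{A}_2)'$ is the commutant of $\{\pi_1(a)\otimes I : a \in \mathcal{A}_1\} \cup \{I \otimes \pi_2(b) : b \in \mathcal{A}_2\}$, which is $(\pi_1(\mathcal{A}_1)\otimes I)' \cap (I \otimes \pi_2(\mathcal{A}_2))'$. Purity of $s_1$ and $s_2$, via the criterion, gives $\pi_1(\mathcal{A}_1)' = \mathbb{C}\cdot I_{\mathcal{H}_1}$ and $\pi_2(\mathcal{A}_2)' = \mathbb{C}\cdot I_{\mathcal{H}_2}$, so by the double-commutant theorem $\pi_1(\mathcal{A}_1)'' = B(\mathcal{H}_1)$ and $\pi_2(\mathcal{A}_2)'' = B(\mathcal{H}_2)$; replacing each algebra by its weak closure does not change the commutant, so we are reduced to showing $(B(\mathcal{H}_1)\otimes I)' \cap (I \otimes B(\mathcal{H}_2))' = \mathbb{C}\cdot I$.

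The final, and only substantial, step is the standard von Neumann algebra fact that $(B(\mathcal{H}_1) \otimes I)' = I \otimes B(\mathcal{H}_2)$ and $(I \otimes B(\mathcal{H}_2))' = B(\mathcal{H}_1) \otimes I$ (a special case of the commutation theorem for tensor products; see~\cite{KR}), which immediately yields $\pi(\mathcal{A}_1\otimes_{\min}\mathcal{A}_2)' = (I\otimes B(\mathcal{H}_2)) \cap (B(\mathcal{H}_1)\otimes I) = \mathbb{C}\cdot I$, finishing the proof. I expect this ingredient to be the main obstacle; if one does not wish to cite it, one can argue it directly by fixing an orthonormal basis of $\mathcal{H}_1$, writing an operator $T$ commuting with all $R\otimes I$ in block form, and using that each block commutes with $B(\mathcal{H}_1)$ together with a rank-one-operator argument to force $T = I \otimes S$; symmetrically $T = R \otimes I$, and hence $T$ is scalar.
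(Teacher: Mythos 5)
Your proposal is correct and follows essentially the same route as the paper: realize $s_1\otimes s_2$ via the GNS data $(\pi_1\otimes\pi_2,\ \mathcal{H}_1\otimes\mathcal{H}_2,\ \eta_1\otimes\eta_2)$, check cyclicity, and conclude purity from triviality of the commutant. The only difference is cosmetic: where you upgrade $\pi_i(\mathcal{A}_i)$ to $B(\mathcal{H}_i)$ by the double-commutant theorem and invoke the commutation theorem $(B(\mathcal{H}_1)\otimes I)' = I\otimes B(\mathcal{H}_2)$, the paper proves commutant triviality directly by the operator-matrix (block) computation—which is exactly the fallback argument you sketch at the end.
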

  \begin{proof} Let $\pi_i: \cl A_i \to B(\cl H_i)$, and $\eta_i$ be a GNS for $s_i, i=1,2$.  Then we have that  
  \[s_1 \otimes s_2(a \otimes b) = \langle \pi_1(a) \otimes \pi_2(b) \eta_1 \otimes \eta_2, \eta_1 \otimes \eta_2 \rangle.\]
  
  Given any vector in $u \in \cl H_1 \otimes \cl H_2$ it can be approximated by a finite sum $\sum_i h_i \otimes k_i$. But since $\eta_i$ are both cyclic vectors, we can approximate $h_i \sim \pi_1(a_i) \eta_1$ and $k_i \sim \pi_2(b_i) \eta_2$.  Thus,
  \[ u \sim \pi_1 \otimes \pi_2(\sum_j a_j \otimes b_j)(\eta_1 \otimes \eta_2).\]
  
  This proves that $\pi_1 \otimes \pi_2$ is the GNS representation for $s_1 \otimes s_2$.  So to show that it is pure we need to show that  $\big( \pi_1 \otimes \pi_2(\cl A_1 \otimes \cl A_2) \big)^{\prime} = \mathbb C \cdot (I \otimes I)$.
  
We can use operator matrices to do this.  If we fix a basis $\{ f_j \}$ for $\cl H_2$, then we can identify
  \[ \cl H_1 \otimes \cl H_2 \simeq \sum_j \cl H_1 \otimes f_j \simeq \oplus_j \cl H_1.\]
  With respect to this identification every $X \in B(\cl H_1 \otimes \cl H_2)$ is represented by a matrix,
  $X=(X_{i,j}),$ with $X_{i,j} \in B(\cl H_1)$.
  
  Note that  $\pi_1 \otimes \pi_2(a \otimes 1)$ becomes the diagonal matrix whose diagonal entries are $\pi_1(a)$.  Now for $X=(X_{i,j})$ in the commutant, we have
 \[(X_{i,j} \pi_1(a)) = X (\pi_1 \otimes \pi_2(a \otimes I)) = (\pi_1 \otimes \pi_2(a \otimes 1))X = (\pi_1(a) X_{i,j}).\]
 Thus, each $X_{i,j} \in \pi_1(\cl A_1)$ and since $s_1$ was pure, we have $X_{i,j} = \lambda_{i,j} I_{\cl H_1}$.
 
 So $X= I_{\cl H_1} \otimes T$ where $T= ( \lambda_{i,j})$ is its matrix representation with repsect to the onb $\{ f_j \}$.
 
 But now the fact that $X = I \otimes T$ commutes with every $I \otimes \pi_2(b)$ and the fact that $s_2$ is pure, implies that $T = \lambda I_{\cl H_2}$.
 \end{proof}
 
 The following results are good for showing that the states on the CAR algebra that we are interested in are pure.
 
 \begin{prop}\label{prop:pure}  Let $M_{n_1} \subseteq M_{n_2} \subseteq \cdots \subseteq \cl A$ be matrix algebras with $\cl A$ equal to the closure of their union, let $s: \cl A \to \mathbb C$ be a state and let $s_k: M_{n_k} \to \mathbb C$ be its restriction. If $s_k$ is pure for all $k$, then $s$ is pure.
 \end{prop}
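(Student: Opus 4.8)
The plan is to use the characterization of purity in terms of dominated positive linear functionals, i.e.\ item~(2) of the proposition immediately preceding this one: a state $t$ on a unital C*-algebra is pure if and only if every positive linear functional $f$ with $f(p)\le t(p)$ for all $p$ in the positive cone is a nonnegative scalar multiple of $t$. This criterion is the convenient one here because domination is a pointwise condition that restricts cleanly to subalgebras, whereas the GNS-irreducibility characterization does not obviously localize to the $M_{n_k}$.

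So I would start by supposing $f:\cl A\to\mathbb C$ is a positive linear functional with $f(p)\le s(p)$ for all $p\in\cl A^{+}$, and let $f_k$ be the restriction of $f$ to $M_{n_k}$. Since the inclusions $M_{n_k}\subseteq M_{n_{k+1}}\subseteq\cl A$ are unital, a positive element of $M_{n_k}$ is positive in $\cl A$, so $f_k(p)\le s_k(p)$ for all $p\in M_{n_k}^{+}$; as $s_k$ is pure, there is a constant $c_k\ge 0$ with $f_k=c_k\,s_k$. The next step is to see the constants agree: because $M_{n_k}\subseteq M_{n_{k+1}}$ and $s_{k+1}|_{M_{n_k}}=s_k$, we have $c_k s_k = f_k = f_{k+1}|_{M_{n_k}} = c_{k+1}\,s_{k+1}|_{M_{n_k}} = c_{k+1}\,s_k$, and evaluating at the identity (where $s_k(I)=1$) gives $c_k=c_{k+1}$. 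Hence all $c_k$ equal a single constant $c\ge 0$, so $f=c\,s$ on $\bigcup_k M_{n_k}$.

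Finally I would close with continuity: a positive linear functional on a unital C*-algebra is automatically bounded (with norm $f(I)$), so both $f$ and $c\,s$ are continuous and, agreeing on the dense set $\bigcup_k M_{n_k}$, agree on all of $\cl A$. Thus every positive functional dominated by $s$ is a scalar multiple of $s$, so $s$ is pure by item~(2).

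I do not expect a genuine obstacle: the only points needing a little care are choosing the dominated-functional formulation of purity (so that the hypothesis on the $s_k$ can actually be used) and invoking the standard automatic-boundedness of positive functionals to make the density argument go through. If one prefers to avoid item~(2), essentially the same argument works through extremality: writing $s=\tfrac12(s'+s'')$ with $s',s''$ states, restricting to each $M_{n_k}$ and using purity (extremality) of $s_k$ forces $s'|_{M_{n_k}}=s''|_{M_{n_k}}=s_k$, and density then gives $s'=s''=s$.
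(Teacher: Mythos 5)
Your argument is correct, and its main line is a genuinely different route from the paper's. The paper proves the contrapositive directly from extremality: assuming $s = t\rho + (1-t)\sigma$ with $s \neq \rho$, it picks $a$ with $|s(a)-\rho(a)| = r > 0$, approximates $a$ within $r/2$ by some $a_k \in M_{n_k}$, and uses the norm estimate (states have norm $1$) to conclude $s_k \neq \rho_k$, so some $s_k$ fails to be pure. You instead work through item~(2) of the preceding proposition (the dominated-positive-functional characterization): restrict a dominated functional $f$ to each $M_{n_k}$, use purity of $s_k$ to get $f_k = c_k s_k$, match the scalars via the unital inclusions and $s_k(I)=1$, and finish by automatic boundedness of positive functionals plus density. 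Each step checks out: positives of $M_{n_k}$ are positives of $\cl A$, restrictions of states along unital inclusions are states, and the continuity argument is sound. What your route buys is that it avoids any explicit $\epsilon$-estimate, at the cost of invoking the equivalence of purity with condition~(2); the paper's route needs only the definition of pure as an extreme point, together with a small quantitative approximation. Your closing alternative sketch via $s = \tfrac12(s'+s'')$ is essentially the paper's argument restated in direct (rather than contrapositive) form, with the minor standard point that checking midpoint decompositions suffices for extremality.
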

 \begin{proof} We show that if $s$ is not pure, then there exists a $k$ so that $s_k$ is not pure.
 If $s$ is not pure, then there exist states $0 <t <1$ and states $\rho, \sigma$ on $\cl A$ such that $s = t \rho + (1-t) \sigma$ and $s \ne \rho$.
 
 If we let $\rho_k, \sigma_k$ be the restrictions to $M_{n_k}$ then $s_k = t \rho_k + (1-t) \sigma_k$. So we need to show that for some $k$,  $s_k \ne \rho_k$.
 But since $s \ne \rho$ there exists $a \in \cl A$ with $|s(a) - \rho(a)| = r>0$.  By density we can find  a $k$ and $a_k \in M_{n_k}$ with  $\|a- a_k\| < r/2 $.  Then
 \[ | s(a_k) - \rho(a_k)| = | (s - \rho)(a_k -a)  + (s- \rho)(a)| \ge |s(a) - \rho(a)| - |(s-\rho(a -a_k)| \ge r - 2\|a-a_k\| >0,\]
 hence $s_k \ne \rho_k$. 
  \end{proof}
  
  As an application consider our state on the CAR algebra defined by taking an infinite tensor product of $(\mathbb C^2, \psi_k)$.  At the $k$-th level this is the vector state induced by $\psi_1 \otimes \cdots \otimes \psi_k \in \mathbb C^{2^k}$ on  $M_{2^k}=B( \mathbb C^{2^k})$. Since $M_{2^k}$ contains all linear transformations on this vector space, the vector is cyclic so this is the GNS representation of the induced state.  But since the commutant of $M_{2^k} = B(\mathbb C^{2^k})$ is trivial, this state is pure.  Thus, since $s_k$ is pure for all $k$,  the induced state on the CAR algebra is pure. 

\subsection{Actions and crossed-products}
Some of our constructions use the concept of the crossed-product of a C*-algebra by an action of a group, which we briefly outline below. A good source for this material is \cite{Pedersen1979}.

Given a C*-algebra $\cl A$ and a unitary $u \in \cl A$, the map $a \to u^* a u$ is a $*$-automorphism of $\cl A$, but generally, not every $*$-automorphism can be obtained in this fashion.
However, given a collection of $*$-automorphisms of $\cl A$, there is always a C*-algebra $\cl B$, containing $\cl A$, such that each of these $*$-automorphism of $\cl A$ is given as conjugation by a unitary in $\cl B$. More precisely, let $Aut(\cl A)$ denote the group of $*$-automorphisms of $\cl A$, let $G$ be a (discrete) group, and let $\alpha: G \to Aut(\cl A),$ be a homomorphism.  Then there is a C*-algebra, denoted $\cl A \rtimes_{\alpha} G$ which is generated as an algebra by $\cl A$ and a set of unitaries, $U_g, \, g \in G$ with the properties that
\begin{itemize}
\item $\cl A \subseteq \cl A \rtimes_{\alpha} G$,
\item $U_g^* a U_g = \alpha_g(a), \forall a \in \cl A, g \in G,$
\item whenever $\cl B$ is a C*-algebra, $\pi: \cl A \to \cl B$ is a one-to-one $*$-homomorphism and there exist unitaries $V_g \in \cl B$ satisfying $V_g^* \pi(a) V_g = \pi(\alpha_g(a))$, then there is a $*$-homomorphism $\Pi: \cl A \rtimes_{\alpha} G \to \cl B$ with $\Pi(a) = \pi(a)$ for all $a \in \cl A$ and $\Pi(U_g) = V_g$, for all $g \in G$.
\end{itemize}

%------------------------------------------------------------------------------%

\begin{thebibliography}{99}

\bibitem{ClauserHSH69}
J.~F. Clauser, M.~A. Horne, A.~Shimony, and R.~A. Holt. 
\href{https://doi.org/10.1103/PhysRevLett.23.880}
{Proposed experiment to test local hidden-variable theories. \textit{Physical Review Letters},
\textbf{23}(15):880--884, 1969.}
%\href{https://doi.org/10.1103/PhysRevLett.23.880}

\bibitem{CleveLP2017}
R.~Cleve, L.~Liu, and V.~Paulsen.
\href{https://doi.org/10.1063/1.4974818}
{Perfect embezzlement of entanglement.
\textit{Journal of Mathematical Physics}, \textbf{58}:012204, 2017.}

\bibitem{vanDamH03}
W.~van Dam and P.~Hayden.
\href{https://doi.org/10.1103/PhysRevA.67.060302}
{Universal entanglement transformations without communication. \textit{Physical Review A}, \textbf{67}(6):060302, 2003.}

\bibitem{Da}
K.~R. Davidson.
\href{https://doi.org/10.1112/S0024609397303610}
{\textit{{C}*-algebras by example}. American Mathematical Society, 1983.}

\bibitem{Fritz2012}
T.~Fritz.
\href{https://doi.org/10.1142/S0129055X12500122}
{{T}sirelson's problem and {K}irchberg's conjecture.
\textit{Reviews in Mathematical Physics}, \textbf{24}(5):1250012, 2012.}

\bibitem{GelfandN1943}
I.~M. Gelfand and M.~A. Naimark.
\href{http://eudml.org/doc/65219}
{On the embedding of normed rings into the ring of operators in {H}ilbert space.
\textit{Matematiceskij sbornik}, \textbf{12}:197--213, 1943).}

\bibitem{JiLV2018}
Z.~Ji, D.~Leung, and T.~Vidick.
\href{https://arxiv.org/abs/1802.04926}
{A three-player coherent state embezzlement game.
Manuscript available at arXiv:1802.04926, 2018.}

\bibitem{junge2011}
M.~Junge, M.~Navascu\'{e}s, C.~Palazuelos, D.~P\'{e}rez-Garc\'{\i}a, V.~B.~Scholz, and R.~F. Werner.
\href{https://doi.org/10.1063/1.3514538}
{{C}onnes' embedding problem and {T}sirelson's problem.
\textit{Journal of Mathematical Physics}, \textbf{52}(1):012102, 2011.}

\bibitem{KR} R.~V. Kadison and J.~R. Ringrose.
%\href{}{}
\textit{Fundamentals of the Theory of Operator Algebras, Volume II: Advanced Theory}. Academic Press, 1986.

\bibitem{Kaniewski2016}
J.~Kaniewski.
\href{https://doi.org/10.1103/PhysRevLett.117.070402}
{Analytic and nearly optimal self-testing bounds for the
{C}lauser-{H}orne-{S}himony-{H}olt and {M}ermin inequalities.
\textit{Physical Review Letters}, \textbf{117}(16):070402, 2016.}

\bibitem{KeylSW2003}
M.~Keyl, D.~Schlingemann, and R.~Werner.
\href{https://doi.org/10.26421/QIC18.15-16}{Infinitely entangled states.
\textit{Quantum Information and Computation} \textbf{3}(4):281--306, 2003.}

\bibitem{LeungTW13}
D.~Leung, B.~Toner, and J.~Watrous.
\href{http://cjtcs.cs.uchicago.edu/articles/2013/11/contents.html}
{Coherent state exchange in multi-prover quantum interactive proof systems.
\textit{Chicago Journal of Theoretical Computer Science}, \textbf{2013}:11, 2013.}

\bibitem{NavascuesP2012}
M.~Navascu\'{e}s and D.~P\'{e}rez-Garc\'{i}a.
\href{https://doi.org/10.1103/PhysRevLett.109.160405}
{Quantum steering and spacelike separation.
\textit{Physical Review Letters}, \textbf{109}(16):160405, 2012.}

\bibitem{Ozawa2013}
N.~Ozawa.
\href{https://doi.org/10.1007/s11537-013-1280-5}
{About the {C}onnes embedding conjecture: {A}lgebraic approaches.
\textit{Japanese Journal of Mathematics}, \textbf{8}(1):147--183, 2013.}

\bibitem{Pedersen1979}
G.~K. Pedersen.
\href{https://doi.org/10.1016/C2016-0-03431-9}
{\textit{{C}*-algebras and their automorphism groups}.
Academic Press, 1979.}

\bibitem{RV2013}
O.~Regev and T.~Vidick.
\href{https://doi.org/10.1145/2799560}
{Quantum {XOR} games.
In \textit{Proceedings of IEEE Conference on Computational Complexity (CCC 2013)}, pages~144--155.
IEEE, 2013.} 

\bibitem{ReichardtUV2013}
B.~W. Reichardt, F.~Unger, and U.~Vazirani.
\href{https://doi.org/10.1145/2422436.2422473}
{A classical leash for a quantum system: Command of quantum systems via rigidity of {CHSH} games.
In \textit{Proceedings of the 4th Conference on Innovations in Theoretical Computer Science}, pages~321--322. ACM, 2013.}

\bibitem{scholz2008}
V.~B. Scholz and R.~F. Werner.
\href{https://arxiv.org/abs/0812.4305}{{T}sirelson's problem. Manuscript available at arXiv:0812.4305, 2008.}

\bibitem{Segal1947}
I.~E. Segal.
\href{https://doi.org/10.1090/S0002-9904-1947-08742-5}
{Irreducible representations of operator algebras.
\textit{Bulletin of the American Mathematical Society}, \textbf{53}:73--88, 1947.}

\bibitem{Slofstra2017}
W.~Slofstra.
\href{https://arxiv.org/abs/1606.03140}
{{T}sirelson's problem and an embedding theorem for groups arising from non-local games.
Manuscript available at arXiv:1606.03140, 2016.}

\bibitem{VidalJN2000}
G.~Vidal, D.~Jonathan, and M.~A. Nielsen.
\href{https://doi.org/10.1103/PhysRevA.62.012304}
{Approximate transformations and robust manipulation of bipartite pure state entanglement.
\textit{Physical Review A}, \textbf{62}:012304, 2000.}

\bibitem{Watrous2018}
J.~Watrous.
\href{https://doi.org/10.1017/9781316848142}{\emph{The theory of quantum information}.
Cambridge University Press, 2018.}

\end{thebibliography}
\end{document}